\documentclass[acmsmall,nonacm]{acmart}
\usepackage[lined,algonl,ruled]{algorithm2e}
\usepackage{listings}
\usepackage{eqparbox}
\usepackage{amsmath}
\usepackage{array}
\usepackage{tikz}
\usepackage{multirow}
\usepackage{array}
\usepackage{makecell}
\usepackage{caption}
\usepackage{enumitem}
\usepackage{ragged2e}
\usepackage{graphicx}
\usepackage{subcaption}

\usetikzlibrary{calc,quotes,arrows.meta}%

\usetikzlibrary{angles,shapes,backgrounds,fit}
\usetikzlibrary{patterns,patterns.meta}

\tikzstyle{ax vertex} = [{circle,inner sep=1.5pt, pattern={Lines[angle=45,distance=1.1pt,line width=0.4pt]}, pattern color=black}]
\tikzstyle{filled vertex}  = [{circle,draw=black, inner sep=1.5pt, fill opacity=0.8}]
\tikzstyle{opac filled vertex}  = [{circle,draw=black, inner sep=1.5pt, fill opacity=0.8, fill opacity=0.5, draw opacity=0.5, dash pattern=on 1.5pt off 1.5pt}]
\tikzstyle{empty vertex}  = [{circle, fill = white, inner sep=1.5pt, minimum width=1.5pt}]

\DeclareMathOperator*{\argmax}{argmax}
\DeclareMathOperator*{\eri}{eri}
\DeclareMathOperator*{\rpi}{rpi}
\DeclareMathOperator*{\hash}{hash}

\definecolor{revblue}{RGB}{40,100,180}

\newcommand{\wzx}[1]{#1}

\newcommand{\ulat}{\mathit{ulat}}
\newcommand{\ulon}{\mathit{ulon}}


\AtBeginDocument{%
  }

\begin{document}

\title{Redundant Array Computation Elimination}

\author{Zixuan Wang}
\orcid{0009-0003-8155-9446}
\affiliation{%
  \institution{Institute of Computing Technology, Chinese Academy of Sciences}
  \city{Beijing}
  \country{China}
}
\affiliation{%
  \institution{University of Chinese Academy of Sciences}
  \city{Beijing}
  \country{China}
}
\email{wangzixuan22@mails.ucas.ac.cn}

\author{Liang Yuan}
\authornote{Corresponding author.}
\orcid{0000-0003-3406-2907}
\affiliation{%
  \institution{Institute of Computing Technology, Chinese Academy of Sciences}
  \city{Beijing}
  \country{China}
}
\email{yuanliang@ict.ac.cn}

\author{Xianmeng Jiang}
\orcid{0000-0002-1664-7402}
\affiliation{%
  \institution{Institute of Computing Technology, Chinese Academy of Sciences}
  \city{Beijing}
  \country{China}
}
\affiliation{%
  \institution{University of Chinese Academy of Sciences}
  \city{Beijing}
  \country{China}
}
\email{jiangxianmeng23z@ict.ac.cn}

\author{Kun Li}
\orcid{0000-0002-1013-1325}
\affiliation{%
 \institution{Institute of Computing Technology, Chinese Academy of Sciences}
  \city{Beijing}
  \country{China}
}
\email{likungw@gmail.com}

\author{Junmin Xiao}
\orcid{0000-0003-0457-4709}
\affiliation{%
  \institution{Institute of Computing Technology, Chinese Academy of Sciences}
  \city{Beijing}
  \country{China}
}
\email{xiaojunmin@ict.ac.cn}

\author{Yunquan Zhang}
\authornotemark[1]
\orcid{0000-0001-7520-9640}
\affiliation{%
  \institution{Institute of Computing Technology, Chinese Academy of Sciences}
  \city{Beijing}
  \country{China}
}
\email{zyq@ict.ac.cn}


\begin{abstract}
Redundancy elimination is a key optimization direction, and loop nests are the main optimization target in modern compilers.
Previous work on redundancy elimination of array computations in loop nests either targets specific computation patterns or fails to recognize redundancies with complex structures.
This paper proposes RACE (Redundant Array Computation Elimination), a hash-based technique that utilizes a novel two-level scheme to identify the data reuse between array references and the computation redundancies between expressions, enabling hierarchical redundancy detection beyond pattern-specific methods.
It traverses the expression trees in loop nests to detect redundancies hierarchically in linear time and generates efficient code with optimized auxiliary arrays that store redundant computation results. Furthermore, RACE supports the expression reassociation with various aggressive strategies to improve the redundancy opportunities.
Experimental results demonstrate the effectiveness of RACE.
\end{abstract}

\keywords{Redundancy elimination, array computation, loop nests}

\maketitle

\section{Introduction}

Redundancy elimination is one of the key optimizations in modern compilers \cite{compilerbooks,optimizing-compilers}. Generally, redundancies fall into two types, unnecessary data movement and redundant computation. The first type includes peephole optimization, copy propagation, and dead-store elimination. The second type contains lazy code motion, dead code elimination, constant folding, common subexpression elimination, value numbering~\cite{constant-propagation,lazy-code-motion,dead-code-elim,partial-redundancies,valuenumber,su2019redundant}. Most techniques in both types target scalar variables and employ the data-flow analysis.

The loop nest transformation is another critical branch of optimizations. It reschedules the computation in the iteration space of the loop nest, such as loop fusion, loop skewing, loop unrolling, vectorizing, and loop tiling~\cite{loop-fusion,loop-fission,loop-interchange}. The primary purpose of these transformations is to improve the data locality and/or expose loop parallelism. In addition, there are several techniques to eliminate redundancies in loop nests. The loop-invariant code motion and induction variable elimination remove redundant computations~\cite{compilerbooks,advanced-compiler-design}, but they still focus on scalar variables. The scalar replacement and array contraction target arrays, but they only reduce the array reference overhead.

Computation redundancies widely exist in loop nests of real-world applications. Furthermore, the redundancies are often related to arrays and across loop iterations, which is beyond the scope of the above-mentioned classic compiler optimization techniques. Several techniques are developed to exploit this kind of redundancy. Cooper et al. \cite{Cooper+:pact08} propose the first redundancy elimination method for loops. Other methods \cite{Basu+:ipdps15,luporini17,luporini20} target specific computation patterns only. A detailed analysis is present in Section \ref{sec-relatedwork}.

Techniques related to arrays usually rely on data dependence analysis~\cite{lamport-data-dependence,maydan1991efficient,glore}. The data dependence detection is generally undecidable since the addresses can be arbitrary functions. Although most array references are expressed by affine functions of loop index variables, solving the dependence precisely over the integer domain still constitutes an NP-hard problem, Diophantine equations~\cite{banerjee-speedup}.

This paper proposes RACE, a novel redundancy elimination method for array computations across loop iterations.
Unlike previous work, which often utilizes data dependence analysis to explore the redundancy,
RACE distinguishes between the principles of data dependence for loop transformation
and data reuse for redundancy detection.
Then it utilizes a novel two-level scheme to identify the array reference reuse and the expression computation redundancy. Redundancies can be easily discovered by grouping the expressions with the same identification value.

RACE traverses the expression trees in loop nests to detect redundancies hierarchically.
Previous work usually requires pairwise comparisons between expressions, resulting in $O(n^2)$ time complexity for $n$ expressions.
However, RACE can complete the
detection on the expression trees in linear time based on the two-level identification scheme.
It generates efficient redundancy-eliminated codes with optimized auxiliary arrays that store
redundant computing results.
The code preserves the original order in the expression evaluation
by only considering redundancies between binary operations, i.e. binary expression trees.
Thus the result consistency is guaranteed.

Furthermore, RACE supports expression reassociation (rearranging operands to expose more opportunities for redundancy elimination)
with various aggressive strategies
\wzx{to enhance redundancy detection.}
This is equivalent to detecting redundancies on n-ary expression trees,
where the computation order can be reorganized.
This paper makes the following contributions:

\begin{itemize}
\item A novel two-level scheme to identify the array references reuse and the expressions computation redundancy.

\item A linear-time algorithm for detecting hierarchical redundancies in binary expression trees with guaranteed floating-point accuracy.

\item An algorithm supporting expression reassociations for discovering more redundancies on n-ary trees.

\end{itemize}

The remainder of this paper is organized as follows.
Section~\ref{sec-motivation} presents an example to explain
the basic idea of this work.
Section~\ref{sec-relatedwork} reviews related work.
Section~\ref{sec:overview} gives an overview of RACE framework.
Section~\ref{sec:identification} introduces the two-level identification scheme for finding array reference reuse and expression computation redundancy.
Section~\ref{sec:detectionbinary} describes the linear redundancy detection algorithm
on binary expression trees
based on the two-level identification scheme
and the optimization of the auxiliary array organization.
Section~\ref{sec:detectionnary} provides
the enhanced algorithm focusing on n-ary expression trees 
and the simplification of the detection strategy.
Section~\ref{sec:threats} discusses the implementation and limitations of RACE.
Section~\ref{sec:experiment} presents the evaluation
and Section~\ref{sec:conclusion} concludes the paper.

\section{Motivation}
\label{sec-motivation}

\begin{figure}[htbp]
  \centering
  \includegraphics[width=\linewidth, trim=0 2 0 6, clip]{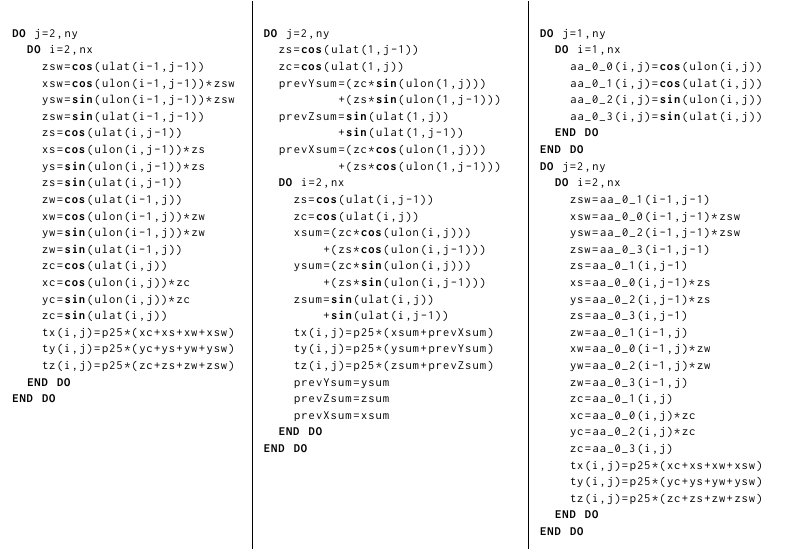}
  \captionsetup{skip=3pt}
  \caption{An example from the Parallel Ocean Program code. Left: the original code. Middle: optimized code by ESR. Right: code generated by RACE after one iteration.}
  \label{fig-pop1}
\end{figure}

\begin{figure}[htbp]
  \centering
  \includegraphics[width=0.9\linewidth, trim=0 2 0 6, clip]{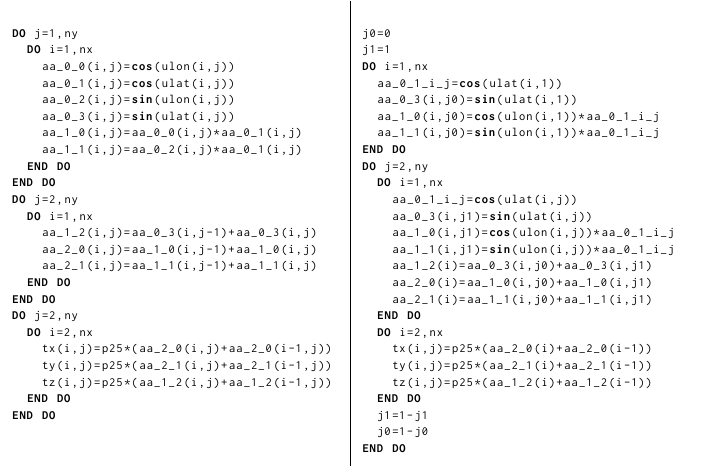}
  \captionsetup{skip=3pt}
  \caption{An example from the Parallel Ocean Program code. Left: code generated by RACE after three iterations. Right: final code by RACE.}
  \label{fig-pop2}
\end{figure}

This section illustrates the fundamental idea of this work through an example.
The left code in Figure~\ref{fig-pop1}
presents a loop nest (in Fortran language) from the Los Alamos National Lab's Parallel Ocean Program (POP).
In each iteration of the loop, it involves $16$ $\sin/\cos$ calls, $11$ multiplications and $9$ additions. 
This original code incorporates the computation redundancy elimination inside the innermost loop.
It employs temporary variables $\mathit{zsw}$, $\mathit{zs}$, $\mathit{zw}$ and $\mathit{zc}$ to hold the corresponding results that are
used twice in successive codes. 
Thus it explicitly saves 4 redundant $\sin/\cos$ calls.
These redundancies can be automatically eliminated by 
classic common subexpression elimination or value numbering in modern compilers.

In addition to the redundancy in the basic block of the loop body, there are obviously other redundancies. 
ESR \cite{Cooper+:pact08} is able to find the redundancies across iterations of the innermost loop.
The code optimized by ESR is shown in the middle code listing of Figure~\ref{fig-pop1}.
ESR identifies that, for example, $\mathit{xw}+\mathit{xsw}$ is the same as the $\mathit{xc}+\mathit{xs}$ computed in the previous iteration of the innermost loop.
Thus it employs extra temporary variables, such as $\mathit{prevXsum}$, to store these redundant computations.
If the three multiplications with the loop-invariant variable $p25$ are ignored, ESR removes half redundancies
in each iteration compared with the original code:
$\sin/\cos$ calls are reduced from 16 to 8, multiplications from 8 to 4, and additions from 9 to 6.

However, computation redundancies also exist across iterations of the outer loop.
For example, the $\mathit{zsw}$ value in iteration $(i,j)$ is the same as $\mathit{zw}$ value in iteration $(i,j-1)$.
Thus it is reasonable to store these values in extra auxiliary arrays.
RACE is able to find all the redundancies in loop nests.
RACE follows a hierarchical scheme.
It detects the redundancies between binary expressions iteratively,
replaces them with auxiliary arrays, and generates the precompute loop before the original loop.
The right list in Figure~\ref{fig-pop1} 
shows the code generated after one iteration of RACE.
It allocates four new auxiliary arrays $aa\_0\_*$, where
$0$ means the first iteration of RACE,
and replaces the computation redundancies with corresponding auxiliary array loads.

RACE repeats the redundancy detection and 
finds all the redundancies after three iterations,
as shown in the left code of Figure \ref{fig-pop2}.
Note that the three auxiliary arrays in the second level
$aa\_1\_*$ belong to two precompute loops since
their ranges differ.
RACE reduces the cost to  $4$ $\sin/\cos$ calls, $5$ multiplications and $6$ additions
per iteration.
The auxiliary arrays exhibit a hierarchical structure
which is used by RACE to optimize the final code, 
as listed in the right code of Figure \ref{fig-pop2}.
Auxiliary arrays $aa\_0\_0$, $aa\_0\_1$ and  $aa\_0\_2$
are only used by other auxiliary arrays in the same precompute loop.
Thus they are removed or replaced with
a scalar variable for multiple uses.
$aa\_0\_3$, $aa\_1\_0$ and  $aa\_1\_1$
can be reduced to arrays whose sizes in $j$ dimension is $2$,
and $aa\_1\_2$, $aa\_2\_0$ and  $aa\_2\_1$
can be reduced to one-dimensional arrays.

\section{Related Work}
\label{sec-relatedwork} 

\subsection{Data Dependence}

The data dependence test 
\cite{optimizing-compilers,yu2012fast,data-depend-83} targets two array references and at least one of them is a write. On the contrary, redundant computation means two or more expressions performing the same operation on the same array elements (or the same values). 
This difference is not critical, since dependence techniques can also be applied to identify two array reads accessing the same memory address. However, the two analyses differ in more essential aspects.

Specifically, the essential differences between the data dependence test and the computation redundancy elimination are three-fold.
First, the dependence test seeks to 
check whether there is at least one iteration that the two references access the same memory address. 
However, the redundancy elimination in loops only targets a large number
of common computations.
Second, the practical general dependence test technique is often conservative in the sense that it may report a false dependence but must report all real data dependence.
However, as an optimization technique, the redundancy elimination should
ensure a potential performance improvement.
Finally, the data dependence test focuses on array references,
while the computation redundancy targets expressions.
Therefore, it needs a new method to determine the redundancy 
between two data reuses and between two expressions.

\subsection{Redundancy Elimination}

Two classic redundancy elimination techniques for the scalar computation
are the value numbering and common subexpression elimination (CSE)~\cite{valuenumber,constant-propagation,common-subexpr-elim}.
The key idea underlying both methods is to record the identification of the operands.
The value numbering assigns the same number to the variables that share the same value,
while CSE utilizes the available expression in the data-flow analysis to recognize the operand change.
However, they only target scalar variables.

Cooper et al.~\cite{Cooper+:pact08} propose the enhanced scalar replacement (ESR), the seminal redundancy elimination method for loops. ESR employs two classic optimization techniques, value numbering and classic scalar replacement (CSR)~\cite{csr}.
ESR has the following disadvantages.
First, ESR considers recomputation only across the innermost loop, thereby failing to discover all redundancies. Second, ESR incorporates scalar replacement, which relies on advanced data dependence analysis. However, data dependence is not suitable for redundancy detection, which requires a more precise definition and method. Third, ESR is a two-step method that separates the dependence-based array partition and expression-level index alignment. This separation leads to unnecessary comparisons between expressions that are not redundant. Fourth, the affinity graph in ESR fails to identify redundant computations among arrays within the same array partition. Finally, ESR seeks to find the maximal computation redundancy in a single execution of the algorithm. However, as the example in Section~\ref{sec-motivation} demonstrates, the computation redundancy may exhibit a hierarchical property.

Deitz et al.~\cite{Deitz+:ics01}
proposed array subexpression elimination (ASE)
for sum-of-product array expressions.
Like ESR, it also aims to eliminate the redundancies over the innermost loop dimension. 
ASE depends on a specific framework called neighborhood tablet 
to search the redundancy.
Basu et al.~\cite{Basu+:ipdps15} targeted stencil computations
and designed a specific optimization scheme. 
It employs extra arrays to 
buffer partial sums within the
leading plane of the whole iteration space.
However, it only focuses on simple stencils and 
reduces redundant computations effectively for 
stencils with symmetric coefficients.
These methods lack generality
and fail to process complex subexpressions, e.g., $ A(i) * C(j,k)$.
Chi and Cong \cite{Chi.Cong:dac20} developed
heuristic search-based reuse (HSBR)
for stencil computations. They did observe the conflict between expressions
but did not formalize it as we do.

RCE (redundant computation elimination) \cite{Yuan+:ica3pp16}
is able to explore redundancies with temporal locality for stencils
but their search algorithm compares all possible expressions
and incurs high overhead.
This scheme is also utilized by the
loop-carried CSE method proposed in \cite{Kronawitter+:ica3pp16}.
On the contrary, our method assigns each expression
a hash number and redundancies can be easily
identified by extracting expressions with the same hash number.
Wen et al.~\cite{Wen+:pact15} implemented a profiler to
pinpoint redundant computations by runtime checking. It finds redundancies in
several stencil computations and eliminates them by hand-written optimizations.
Ding et al.~\cite{glore} proposed GLORE, which mainly focuses on loop-invariant reductions. For example, it is able to identify the redundant reduction $\sum_k B(j,k)$ in $\sum_{i,j,k} A(i,j)B(j,k)$. In contrast, RACE primarily targets the loop-carried redundant expressions, e.g.  $\cos(\ulat (i,j))$ in iteration $(i,j)$ is identical to  $\cos(\ulat (i,j-1))$ in iteration $(i,j+1)$.

Redundancy elimination, as well as many other compiler optimizations, such as loop transformation, memory access optimization and vectorization, relies on the results of aliasing analysis~\cite{alias-analysis, interprocedual-alias}. 
Alias analysis disambiguates memory references, primarily serving for the legality verification or candidate selection. 
For example, existing work like ESR~\cite{Cooper+:pact08} selects the redundancy candidates using the results of data dependence analysis which implicitly incorporates alias analysis. 
Thus, alias analysis is orthogonal to these optimizations to some extent.
Furthermore, existing aliasing analysis techniques are powerful but not panaceas.
Other mechanisms such as the \texttt{restrict} keyword, \texttt{-fno-alias} compiler option, and \texttt{\#pragma ivdep} directive are provided to enable more aggressive optimizations.
Therefore, in this paper, RACE focuses on the detection technique of redundancies, instead of directly incorporating alias analysis.

\section{Overview}
\label{sec:overview}

\subsection{Scope and Assumptions}
\label{sec:target}

This paper focuses on perfectly nested loops without internal control flow.
RACE eliminates redundant computations in unmodified arrays within the loop.
Each candidate expression considered in the algorithm involves arithmetic operations on scalar or array variables.
RACE targets array references of the form $A[a_1i_{s_1}+b_1]\cdots[a_ni_{s_n}+b_n]$
(in the following text, array notations may appear in either C-style (e.g., $A[i][j]$)
or Fortran-style (e.g., $A(i,j)$), depending on the context),
where $s_k$ is a loop level number ranking from the outermost loop to the innermost,
$1\leqslant s_k\leqslant m$, $m$ is the nesting depth of the loop,
$i_{s_k}$ is a loop index, $a_k$ and $b_k$ are constants.
For all the array references in the above POP example in Figure~\ref{fig-pop1},
we have $s_1=2$ and $s_2=1$, $i_{s_1}=i$ and $i_{s_2}=j$.
This form covers most loop nests in real-world applications.
For the applications used in the evaluation of this work,
over 95\% of array accesses in loops conform to this pattern.
Scalars are treated as zero-dimensional array references.
Function calls such as $\sin$ and $\cos$ can be interpreted as binary operators $\odot$, where the function name is treated as a scalar variable.
For example, $\cos(ulat(1,j-1))$ can be viewed as a binary expression $\cos \odot ulat(1,j-1)$, where $\cos$ and $\ulat(1,j-1)$ are the left and right operands.

\subsection{RACE Framework}

\begin{figure}[!t]
  \centering
  \includegraphics[width=0.98\textwidth]{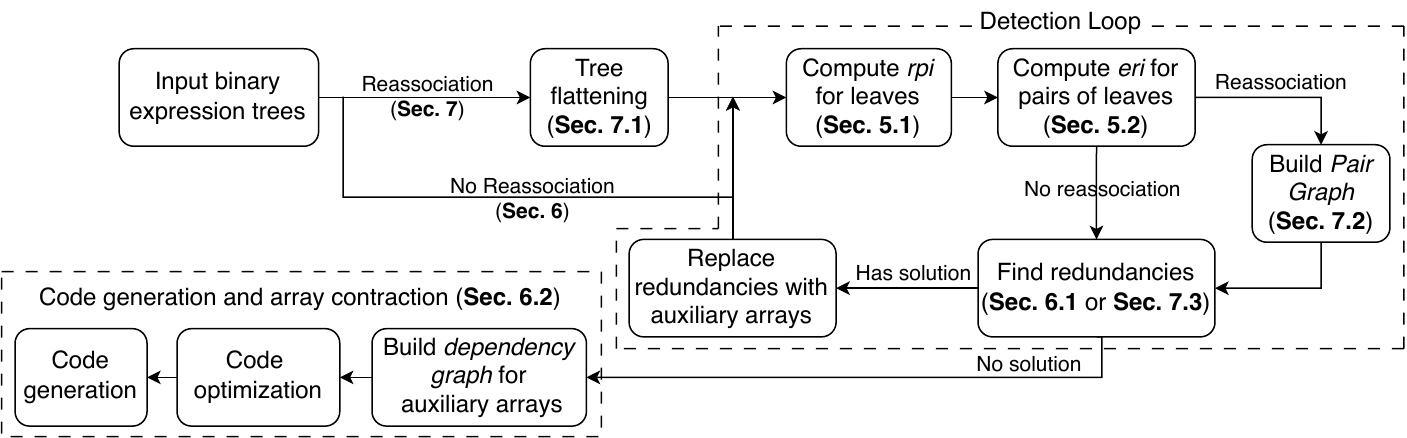}
  \caption{Framework of the RACE method. "No reassociation" and "reassociation" correspond to the detection procedures on binary tree (Section~\ref{sec:detectionbinary}) and n-ary tree (Section~\ref{sec:detectionnary}), respectively.}
  \label{fig:overview}
  \vspace{-0.2cm}
\end{figure}

Figure~\ref{fig:overview} illustrates the main components and overall workflow of RACE.
The subsequent sections present the details of the components and algorithms involved.
RACE takes binary expression trees inside a perfectly nested loop as input.
The detection algorithm for binary tree (Section~\ref{sec:detectionbinary}) follows the "no reassociation" path. It preserves the original evaluation order of expressions, thereby preserving floating-point results. The detection algorithm for n-ary tree (Section~\ref{sec:detectionnary}) follows the "reassociation" path, which restructures the input trees through tree flattening (Section~\ref{subsec:flattening}) before the detection loop, exposing additional redundancies.

RACE detects redundancy iteratively. In each iteration of the detection loop, it identifies redundant binary subexpressions, replaces them with references to auxiliary arrays, and then continues on the transformed trees so that larger redundant expressions can be exposed in later iterations. This process is driven by a two-level hash-based identification scheme: RACE first computes the reference pattern identifier $\rpi$ (Section~\ref{subsec:rpi}) for array references and scalar leaves, and then computes the expression redundancy identifier $\eri$ (Section~\ref{subsec:eri}) for binary subexpressions. The binary-tree algorithm enables linear-time detection, whereas the n-ary-tree algorithm uses the Pair Graph (Section~\ref{subsec:nary-algorithm}) to resolve conflicts among multiple candidate redundancies. After detection, RACE generates optimized code and then applies auxiliary-array optimization to reduce storage overhead and simplify the transformed program (Section~\ref{subsec:optimization}).

\section{Redundancy Identification}
\label{sec:identification}

This section describes the two-level redundancy identification scheme.
The first level defines a hash function for array references to identify data sharing.
Based on the first level, the second level defines a hash function for binary expressions, and redundancies can be efficiently discovered by grouping the expressions with the same hash value.

\subsection{Reference Pattern Identifier ($\rpi$)}
\label{subsec:rpi}

To recognize the redundancy among expressions, the data sharing among array references needs to be analyzed.
Existing data dependence techniques test whether two array references may access at least one common array element.
On the contrary, the redundancy elimination of array computations requires the reference of a sufficiently large set of common array elements.
This paper introduces array \emph{reference pattern}
and only considers the redundancy among array references with the same reference pattern. 

Intuitively, two array references share the same reference pattern only if they access the same infinite lattice in the iteration space, assuming that each loop index $i_k$ ranges over $\mathbb{Z}$.
For example, $A[i][j]$ and $A[i+1][j-1]$ both touch all the lattice points in $\mathbb{Z}^2$, while the reference lattices of $A[2i]$ and $A[2i+1]$ are disjoint, and $A[2i]$ and $A[3i]$ access partially overlapping lattice points.

Formally, an array reference of the form $A[a_1i_{s_1}+b_1]\cdots[a_ni_{s_n}+b_n]$ accesses the infinite integer lattice
$\mathcal{L}(\mathbf{B},\mathbf{b}) = \{\mathbf{B} \mathbf{x} + \mathbf{b} \mid \mathbf{x}=(i_1,\cdots, i_m)\in \mathbb{Z}^m\},$
where $m$ is the nesting depth of the loop, $n$ is the dimension of the array $A$, $\mathbf{B}$ is an $n\times m$ basis matrix with $\mathbf{B}_{k,s_k} = a_k$ and all other entries equal to 0, and $\mathbf{b}$ is an $n$-dimensional offset vector whose $k$-th component is $b_k$.
According to lattice theory, two array references access the same infinite lattice in $\mathbb{Z}^n$ if and only if there exists a unimodular matrix $\mathbf{U}$ such that
$\mathbf{B} \mathbf{U} = \mathbf{B}' \text{ and } \mathbf{b} - \mathbf{b}' \in \mathcal{L}(\mathbf{B},\mathbf{0}),$
where $\mathbf{B}, \mathbf{b}$ and $\mathbf{B}', \mathbf{b}'$ denote their basis matrices and offset vectors.

A unimodular matrix can be represented as a product of elementary column operations, including: 1. column exchange ($A[i][j]$ to $A[j][i]$); 2. column multiplied by $-1$ ($A[i][j]$ to $A[-i][j]$); and 3. adding an integer multiple of one column to another ($A[i][j]$ to $A[i+2j][j]$).
However, the reference patterns identified in this paper do not cover general unimodularly related lattices, for two practical reasons. 
First, the third operation does not appear since we have restricted the form of array subscripts previously, i.e. each row of $\mathbf{B}$ has only one nonzero entry. We also rarely detect redundancies that involve the first two operations in the evaluated applications.
Second, although the unimodular condition is concise for infinite lattices, it is difficult in practice to decide whether two unimodularly related finite lattices overlap on a sufficiently large region~\cite{Clauss96}, especially when loop bounds are unknown at compile time.
Therefore, to simplify detection, we define the reference pattern as follows. Two array references have the same reference pattern \wzx{if and only if} their lattices $\mathcal{L}(\mathbf{B},\mathbf{b})$ and $\mathcal{L}(\mathbf{B}', \mathbf{b}')$ satisfy:
1. $\mathbf{B} = \mathbf{B}'$, and 2. $\mathbf{b} - \mathbf{b}' \in \mathcal{L}(\mathbf{B},\mathbf{0})$ (i.e., $\mathbf{B} \mathbf{x} = \mathbf{b} - \mathbf{b}'$ has an integral solution).

Checking these two conditions appears to require pairwise comparisons between array references. However, the information required by the two conditions can be encoded locally for each individual array reference.
We introduce a hash function, called \emph{reference pattern identifier} ($\rpi$), to perform this encoding. If two array references have the same $\rpi$ value, they satisfy the above two conditions and thus have the same reference pattern.
The first condition ($\mathbf{B} = \mathbf{B}'$) is straightforward.
It follows that if two references share the same reference pattern, their index list (dimension axis) $\langle i_{s_1},\cdots,i_{s_n} \rangle$ and the coefficient list $\langle a_1,\cdots,a_n \rangle$ must be the same, respectively. Thus, $\rpi$ must incorporate these two lists.

The second condition ($\mathbf{b} - \mathbf{b}' \in \mathcal{L}(\mathbf{B},\mathbf{0})$) needs more elaboration.
Consider two potential references $A[a_1i_{s_1}+b_1]\cdots[a_ni_{s_n}+b_n]$ and $A[a_1i_{s_1}+b_1']\cdots[a_ni_{s_n}+b_n']$.
If a loop index $i_{s_k}=i$ appears only once in the subscripts, 
$a_ki = b_k-b_k'$ has an integral solution if and only if
$b_k \bmod a_k = b_k' \bmod a_k$.
Thus, $\rpi$ incorporates
$b_k \bmod a_k$.
For example, $A[2i]$ and $A[2i+2]$ share the same $\rpi$. 
If $i$ appears in multiple subscripts, the successive deltas (differences) must also match. 
Assuming $i_{s_k}=i_{s_j}=i$,
the equations 
$a_k i = b_k - b_k'$ and 
$a_j i = b_j - b_j'$
must share the same integer solution.
Besides the condition $b_k \bmod a_k = b_k' \bmod a_k$,
it is also required that $b_k/a_k - b_j/a_j = b_k'/a_k - b_j'/a_j$. Therefore $\rpi$ should also incorporate $b_k/a_k - b_j/a_j$.
For example, $A[2i+1][3i+2]$ and $A[2i+3][3i+5]$ have the same delta value $2/3-1/2=5/3-3/2=1/6$ for $i$.
Thus they access the same infinite lattice.

\begin{algorithm}[h]
    \caption{Calculating information for $\rpi$}
    \label{alg:rpi}
    \SetKwInOut{Input}{input}
    \SetKwInOut{Output}{output}
    \Input{An array reference $x=A[a_1 i_{s_1} + b_1]\cdots[a_n i_{s_n} + b_n]$}
    \Output{$\mathit{indexList},\mathit{indexCoef},\mathit{indexDelta},\mathit{firstIndexOffset}$ of $x$}
    \tcp{$x.\mathit{indexList}$ and $x.\mathit{indexCoef}$ are lists}
    \tcp{$x.\mathit{firstIndexOffset}$ is an array of rationals}
    \tcp{$x.\mathit{indexDelta}$ is an array of lists of rationals}

    $x.\mathit{firstIndexOffset}[1\cdots m] = \infty$; \tcp{$m$ is the nesting depth of the loop} \
    \For{$k \gets 1$ \KwTo $n$}{
        \uIf{$a_k \neq 0$}{
            $x.\mathit{indexList}.$append$(s_k)$\;
            $x.\mathit{indexCoef}.$append$(a_k)$\;
            \uIf{$x.\mathit{firstIndexOffset}[s_k] = \infty$}{
                $x.\mathit{firstIndexOffset}[s_k] = b_k/a_k$\;
                $x.\mathit{indexDelta}[s_k]$.append($b_k \bmod a_k$)\;
            } \Else{
                $x.\mathit{indexDelta}[s_k]$.append($b_k/a_k - x.\mathit{firstIndexOffset}[s_k]$) \;
            }
        } \Else{
            $x.\mathit{indexList}.$append$(0)$\; 
            $x.\mathit{indexCoef}$.append($b_k$) \;
        }
    }
\end{algorithm}

Algorithm~\ref{alg:rpi} shows the pseudo-code that extracts the information for $\rpi$. The descriptions of the variables used are given below:
$\mathit{indexList}$ and $\mathit{indexCoef}$ store the loop index variables and their corresponding coefficients in the order they appear.
$\mathit{firstIndexOffset}$ stores the normalized offset of each loop index variable when it first appears. It is used to compute $\mathit{indexDelta}$, which contains the successive deltas of the same loop index variable. In calculating $\mathit{indexDelta}$, it keeps the rational results in irreducible fraction form.
If a loop index is missing, i.e. $a_k = 0$, it appends a virtual loop level $0$ and the corresponding constant $b_k$ must be added to the coefficient list. For example, $A[i][1]$ and $A[i][2]$ do not share common elements, and the coefficients $1$ and $2$ are the only identifiers in this dimension. 
Let $x.\mathit{name}$ denote the name of the array reference $x$.
Putting it all together, $\rpi$ is formulated as follows:

$$\rpi(x) = \hash(x.\mathit{name},x.\mathit{indexList},x.\mathit{indexCoef},x.\mathit{indexDelta}).$$

\subsection{Expression Redundancy Identifier ($\eri$)}
\label{subsec:eri}

\wzx{With $\rpi$ of each operand, we define the expression redundancy identifier ($\eri$) to recognize expressions with redundant computations.
$\eri$ also uses the local-encoding idea to avoid pairwise comparisons.
Expressions sharing the same $\eri$ value are considered redundant.}
Intuitively, $\rpi$ identifies potential reuse at the array-reference level, while $\eri$ further distinguishes whether two expressions align in their operand index shifts.
$\eri$ cannot be as simple as the classical value number definition like $vn(e)=\hash(vn(x), \oplus, vn(y))$.
The expressions may still not represent redundant computations, even if the corresponding operands in two expressions share the same $\rpi$ values. 
For example, the two expressions $e=A[i]+B[i]$ and $e'=A[i+1]+B[i+2]$ are not redundant, although $\rpi(A[i])=\rpi(A[i+1])$ and $\rpi(B[i])=\rpi(B[i+2])$.
Therefore, we introduce the expression-level delta value $\mathit{exprDelta}$.
Algorithm \ref{alg:eri} presents the pseudo-code for calculating 
this information for $\eri$. 
$\mathit{exprDelta}$ stores the delta between the two operands of the expression for each loop index variable. For example, supposing the loop nest order is $\langle i,j,k \rangle$, for $e=A[i][2j+1]+B[2i+3][k]$, we have $A[i][2j+1].\mathit{firstIndexOffset}=\langle 0,1/2,\infty \rangle, B[2i+3][k].\mathit{firstIndexOffset}=\langle 3/2,\infty,0 \rangle$, thus $e.\mathit{exprDelta}=\langle -3/2,\infty,\infty \rangle$.

\begin{algorithm}[!htb]
    \caption{Calculating information for $\eri$}
    \label{alg:eri}
    \SetKwInOut{Input}{input}
    \SetKwInOut{Output}{output}
    \Input{An expression $e=x\oplus y$}
    \Output{$\mathit{exprDelta}$ of $e$}
    \tcp{$e.\mathit{exprDelta}$ is an array of rationals}
    $e.\mathit{exprDelta}[1\cdots m] = \infty$ \;
    \For{$s_k \in \{x.\mathit{indexList}\} \cap \{y.\mathit{indexList}\} - \{0\}$}{
            $e.\mathit{exprDelta}[s_k] = x.\mathit{firstIndexOffset}[s_k] - y.\mathit{firstIndexOffset}[s_k]$\;
    }
\end{algorithm}

By incorporating $\mathit{exprDelta}$ into the hash, $\eri$ captures the expression-level alignment information and yields identical values only when both the operand reference patterns and their relative index offsets are the same, which ensures correctness for expression-level redundancy detection.
The formula of $\eri$ for $e=x\oplus y$ is as follows:
$$\eri(e) = \hash(\rpi(x),\oplus,\rpi(y),e.\mathit{exprDelta}).$$

For commutative operators such as addition and multiplication, the two operands need to be sorted to recognize the redundancy.
For example, $A[i]+B[i]$ and $B[i+1]+A[i+1]$
are considered as redundant computations.
The two operands can be sorted by their $\mathit{name}$
$A$ and $B$ lexicographically.
If the two operands access the same array,
they are sorted according to their other $\rpi$ information:
$\mathit{indexList}$, $\mathit{indexCoef}$ and $\mathit{indexDelta}$.
Thus, the redundancies between 
$A[i]+A[2i]$ and $A[2i+2]+A[i+1]$,
$A[i]+A[i+1]$ and $A[i+2]+A[i+1]$
can be found.
Without loss of generality, we assume that the operands in $e=x\oplus y$ are already in order.

\section{Redundancy Elimination on Binary Trees}
\label{sec:detectionbinary}

\begin{figure}[!t]
  \centering
  \includegraphics[width=0.98\textwidth]{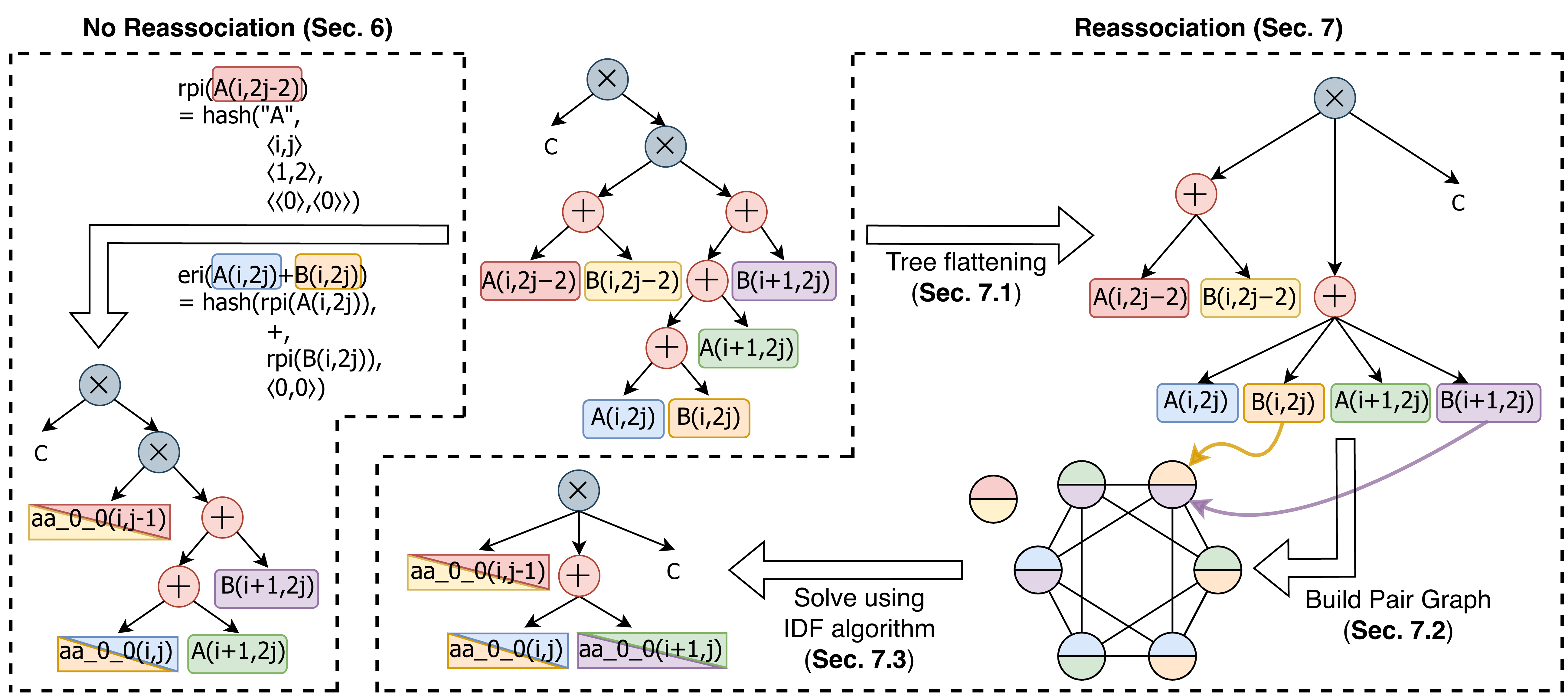}
  \caption{Example of redundancy detection in RACE. The middle shows the input expression tree. The left illustrates the binary-tree detection algorithm (Section~\ref{sec:detectionbinary}), which detects redundancies while preserving the original evaluation order. The right illustrates the n-ary-tree detection algorithm (Section~\ref{sec:detectionnary}), where tree flattening, Pair Graph construction, and the IDF algorithm enable the detection of additional redundancies. The detected redundancies are replaced with auxiliary array references, yielding different transformed expression trees.}
  \label{fig:example}
\end{figure}

\subsection{Detection Algorithm}
\label{subsec:detectionbinary}

This section presents the redundancy detection algorithm for binary expression trees. Two reasons motivate this choice of intermediate representation (IR).
The first is to preserve result consistency, since binary trees fix the computation order and thus maintain identical numerical results under floating-point arithmetic. The second reason is that array redundancies naturally form a hierarchical dependence structure, which cannot be captured by a linear IR (e.g., three-address code). For example, in $a=B[\dots]\oplus C[\dots]$, the $\rpi$ values of the operands and the $\eri$ of $a$ can be computed. 
When a subsequent statement $e=a \oplus D[\dots]$ appears, $\rpi(a)$ is not informative for array-aware redundancy detection, because scalar $a$ no longer carries the reference information of $B[\cdots]$ and $C[\cdots]$ until it is replaced by an auxiliary array $F[\dots]$. This contrasts with the classical value numbering method, which works on scalar expressions in three-address code IR. It can detect all redundancies in a single sequential traversal of the IR, since for a scalar assignment $a=b\oplus c$, the value number of $a$ can be directly computed from those of $b$ and $c$, and then used in subsequent redundancy checks.

The detection algorithm follows the "no reassociation" path in Figure~\ref{fig:overview}, with an example shown in the left panel of Figure~\ref{fig:example}. It enters the detection loop without modifying the structure of the input tree. First, it computes the $\rpi$ values for all leaf nodes. Second, it assigns $\eri$ values to operator nodes whose two children are leaves.
Expressions with the same $\eri$ value are then treated as redundant and replaced by new auxiliary array references.
These new references must maintain consistent subscripts.
To achieve this, one representative expression is selected for each auxiliary array.
For example, $A(i,2j)+B(i,2j)$ and $A(i,2j-2)+B(i,2j-2)$ share the same $\eri$,
and an auxiliary array $aa\_0\_0$ is allocated for them.
A representative expression is assigned to $aa\_0\_0$, e.g., $aa\_0\_0(i,j) = A(i,2j)+B(i,2j)$.
The other redundant expressions are replaced by references to this array with corresponding index offsets, e.g., $A(i,2j-2)+B(i,2j-2)$ will be replaced by $aa\_0\_0(i,j-1)$.
Finally, the newly introduced auxiliary array references are treated as new leaves.
The detection loop continues until no new redundancies are detected.

The algorithm detects and organizes redundancies hierarchically, as illustrated by the POP example in Section~\ref{sec-motivation}.
In each iteration of the detection loop (except the first), new redundant expressions are discovered, whose operands include auxiliary arrays generated in the previous iteration.
This iterative process is valid because redundancies are composable.
For instance, if $(x \oplus y) \oplus z$ is redundant with $(x' \oplus y') \oplus z'$, it can be identified in the next iteration after $x \oplus y$ is first recognized as redundant and replaced with an auxiliary array.

The algorithm runs in linear time complexity:
the redundancies can be identified in a single bottom-up traversal of the binary expression tree.
The $\rpi$ values of the leaves and the $\eri$ values of the expressions are stored directly within the tree nodes.
The proposed two-level hash scheme in Section \ref{sec:identification} simplifies the redundancy detection procedure, so that the expressions sharing the same $\eri$ can be identified as redundant and replaced by auxiliary arrays references as new tree leaves. In the next iteration, only the newly introduced leaves and expressions need to be processed.

\subsection{Code Generation and Array Contraction}
\label{subsec:optimization}

\begin{figure}[!h]
  \centering
  \begin{tabular}{c c}
    \begin{minipage}{0.13\textwidth}
    \end{minipage}
    \begin{minipage}{0.6\textwidth}
      \tikzstyle{vertex} = [rectangle, inner sep=-4pt, fill=none, minimum size=14pt, draw=none,font=\ttfamily]
      \begin{tikzpicture}[scale=1]
        \node[vertex] (a12) at (0,0) {\small aa\_1\_2};
        \node[vertex] (a21) at ($(a12)+(-2.4,1.1)$) {\small aa\_2\_1};
        \node[vertex] (a20) at ($(a21)+(-1.5,0)$) {\small aa\_2\_0};

        \node[vertex, circle, draw=black] (tx) at ($(a20)+(0,1.1)$) {\small tx};
        \node[vertex, circle, draw=black] (ty) at ($(a21)+(0,1.1)$) {\small ty};
        \node[vertex, circle, draw=black] (tz) at ($(a12)+(0,2.2)$) {\small tz};
        
        \node[vertex] (a03) at ($(a12)+(0,-1.1)$) {\small aa\_0\_3};
        \node[vertex] (a11) at ($(a21)+(0,-1.1)$) {\small aa\_1\_1};
        \node[vertex] (a10) at ($(a20)+(0,-1.1)$) {\small aa\_1\_0};
      
        \node[vertex] (a02) at ($(a11)+(0.75,-1.1)$) {\small aa\_0\_2};
        \node[vertex] (a01) at ($(a11)+(-0.75,-1.1)$) {\small aa\_0\_1};
        \node[vertex] (a00) at ($(a10)+(-0.75,-1.1)$) {\small aa\_0\_0};

        \draw [cyan, fill=cyan, fill opacity=0.0] plot [smooth cycle] coordinates {($(a10)+(-0.3,0.2)$) ($(a00)+(-0.35,-0.3)$) 
          ($(a03)+(0.3,-0.3)$) ($(a03)+(0.3,0.3)$) ($(a11)+(0.3,0.25)$)};

        \draw [orange, fill=orange, fill opacity=0.0] plot [smooth cycle] coordinates {($(a20)+(-0.35,-0.3)$) ($(a20)+(-0.25,0.3)$) ($(a21)+(0.4,0.3)$) 
          ($(a12)+(0.4,0.3)$) ($(a12)+(0.45,-0.2)$) ($(a12)+(-0.35,-0.2)$) ($(a11)+(0.5,0.65)$) };

        \draw [teal, fill=teal, fill opacity=0.0] plot [smooth cycle] coordinates {($(tx)+(0.05,-0.35)$) ($(tx)+(0.05,0.35)$)
          ($(tz)+(-0.05,0.35)$) ($(tz)+(-0.05,-0.35)$) };

        \node[vertex] (btx) at ($(tx)+(-1.1,0)$) {\scriptsize $[2,nx][2,ny]$};
        
        \node[vertex] (dtxl) at ($(tx)!0.5!(a20)-(0.7,0)$) {\footnotesize $(i-1,j)$};
        \node[vertex] (dtxr) at ($(tx)!0.5!(a20)+(0.5,0)$) {\footnotesize $(i,j)$};
        \node[vertex] (b20) at ($(a20)+(-1.3,0)$) {\scriptsize $[1,nx][2,ny]$};
        
        \node[vertex] (d20l) at ($(a20)!0.5!(a10)-(0.7,0)$) {\footnotesize $(i,j-1)$};
        \node[vertex] (d20r) at ($(a20)!0.5!(a10)+(0.5,0)$) {\footnotesize $(i,j)$};
        \node[vertex] (b10) at ($(a10)+(-1.3,0)$) {\scriptsize $[1,nx][1,ny]$};
        
        \node[vertex] (d10l) at ($(a10)!0.5!(a00)-(0.25,0)$) {\footnotesize $(i,j)$};
        \node[vertex] (b00) at ($(a00)+(-1.3,0)$) {\scriptsize $[1,nx][1,ny]$};

        \newcommand{\doublearrow}[2]{
          \path[-stealth] (#1) edge[bend left=15] (#2);
          \path[-stealth] (#1) edge[bend right=15] (#2);
        }
      
        \foreach \x/\y in {a12/a03, a21/a11, a20/a10, tx/a20, ty/a21}{
          \doublearrow{\x}{\y};
        }

        \path[-stealth] (tz) edge[bend right=10] (a12);
        \path[-stealth] (tz) edge[bend left=10] (a12);
      
        \path[-stealth] (a11) edge[bend right=10] (a02);
        \path[-stealth] (a11) edge[bend left=10] (a01);
        \path[-stealth] (a10) edge[bend right=10] (a01);
        \path[-stealth] (a10) edge[bend left=10] (a00);

      \end{tikzpicture}
    \end{minipage}
    &
    \begin{minipage}{0.2\textwidth}
      \tikzstyle{vertex} = [rectangle, rounded corners, fill=none,minimum size=0pt, inner sep=-4pt, draw=black,font=\ttfamily]
      \begin{tikzpicture}[scale=1.2]

        \node[vertex, draw=none] (doj) at (0,0) {\small \texttt{DO j=2,ny}};
        \node[vertex, draw=none] (doi) at ($(doj)+(0.2,-0.35)$) {\small \texttt{DO i=2,nx}};
        \node[vertex, draw=none] (arrow0f) at ($(doj)+(-0.7,+0.23)$) {};
        \node[vertex, draw=none] (arrow0t) at ($(doj)+(-0.3,+0.23)$) {};
        \node[vertex, draw=none] (arrow1f) at ($(doj)+(-0.7,-0.17)$) {};
        \node[vertex, draw=none] (arrow1t) at ($(doj)+(-0.3,-0.17)$) {};
        \node[vertex, draw=none] (arrow2f) at ($(doi)+(-0.7,-0.17)$) {};
        \node[vertex, draw=none] (arrow2t) at ($(doi)+(-0.3,-0.17)$) {};

        \path[-stealth] (arrow0f) edge[bend left=0] (arrow0t);
        \path[-stealth] (arrow1f) edge[bend left=0] (arrow1t);
        \path[-stealth] (arrow2f) edge[bend left=0] (arrow2t);

        \node[vertex, draw=none] (note0) at ($(arrow0f)+(-0.1,0)$) {0};
        \node[vertex, draw=none] (note1) at ($(arrow1f)+(-0.1,0)$) {1};
        \node[vertex, draw=none] (note2) at ($(arrow2f)+(-0.1,0)$) {2};

      \end{tikzpicture}
    \end{minipage}
     \\
  \end{tabular}
  \caption{Dependency graph of the auxiliary arrays in POP benchmark.}
  \label{fig:dependency-graph}
\end{figure}

\emph{Code Generation.}
The expression trees are optimized with auxiliary arrays after the execution of RACE. Since the previous auxiliary arrays may be used by the newer ones during the iterative process, the dependencies between auxiliary arrays are represented by a directed acyclic graph (the \emph{dependency graph}) with a natural topological order. Figure~\ref{fig:dependency-graph} illustrates the dependency graph of auxiliary arrays in POP benchmark.

To generate the straightforward code (left of Figure~\ref{fig-pop2}), we need (i) a topological order and (ii) loop ranges. The top nodes $\mathit{tx}, \mathit{ty}, \mathit{tz}$ inherit the original loop range $[2,\mathit{nx}][2,\mathit{ny}]$. Using the representative expressions, each parent propagates its reference offsets to determine the child's range. Ranges are computed in topological order. For example, $aa\_2\_0$ is referenced at $(i,j)$ and $(i-1,j)$ by $\mathit{tx(i,j)}$, so its range becomes $[1,\mathit{nx}][2,\mathit{ny}]$. Figure~\ref{fig:dependency-graph} annotates nodes with their ranges and reference offsets.

Nodes with identical ranges are grouped into a \emph{range circle} (colored circle in Figure~\ref{fig:dependency-graph}); arrays in the same circle are initialized together by one loop. 
We then build a coarse graph over range circles, where each range circle corresponds to a node in the coarse graph, and generate loops in its topological order; inside a circle we still respect the fine-grained dependency order. This yields the code on the left of Figure~\ref{fig-pop2}.

\emph{Array Contraction.}
This simple code allocates a large space for the auxiliary array. However, it is possible to decrease the auxiliary dimensionality and reuse a small array space. Several contraction rules are given as follows:
First, if an auxiliary array has a single parent, inline its representative expression instead of storing it. In Figure~\ref{fig-pop2}, $aa\_0\_0$ and $aa\_0\_2$ are replaced by $\cos(\ulon(i,j))$ and $\sin(\ulon(i,j))$.
Second, if an auxiliary array and all its parents are in the same range circle, all reference offsets must be zero, so it can be a scalar. Thus $aa\_0\_1$ becomes $aa\_0\_1\_i\_j$ (Figure~\ref{fig-pop2}, right).
Third, the dimension of auxiliary arrays can be eliminated or compressed by inserting precompute loops at inner insertion positions (numbered in Figure~\ref{fig:dependency-graph}, right). Treat each circle as a unit and, following the topological order, check positions from 0 inward. If a circle's range along dimension $j$ matches that of the original loop, move it to position 1 and eliminate dimension $j$ from all arrays inside it.
The check can repeat inward until it fails. The orange circle satisfies this, so $aa\_2\_0, aa\_2\_1, aa\_1\_2$ lose dimension $j$ (Figure~\ref{fig-pop2}, right). Note that the innermost dimension $i$ is retained to facilitate vectorization.
Even if the above dimension range check fails, the dimension can still be compressed. For example, the blue circle's $j$ range $[1,\mathit{ny}]$ differs from the original $[2,\mathit{ny}]$ due to $j-1$ references offset. Rather than placing the loop at position 0, we split it into a boundary prefetch $[1,1]$ at position 0 and a main part $[2,\mathit{ny}]$ at position 1, allocate only two $j$-slices ($j_0,j_1$) and swap them each iteration, which is a double buffer of size $\mathit{nx}\times2$. $aa\_0\_3$, $aa\_1\_0$, and $aa\_1\_1$ illustrate this situation (Figure~\ref{fig-pop2}, right).

\subsection{Redundancy Analysis}
\label{subsec:redundancy-analysis}

After identifying the expressions with the same $\eri$ value, these redundant expressions can be replaced with precomputed auxiliary arrays. The range information for each auxiliary array is computed in the dependency graph. The evaluation of the profit gained from the replacement, i.e., the number of arithmetic operations eliminated, is discussed in this section.

Suppose that the auxiliary arrays $aa_1, \cdots, aa_m$ appear in the transformed expression trees while the remaining positions are the original tree nodes. Let $cnt(aa_k)$ denote the number of occurrences of $aa_k$ in the expression trees. For loop ranges, only the difference between the lower and upper bound is considered, which is denoted by $r(i_t)$ for the original loop range at index $i_t$. Let $r(i_t,aa_k)$ denote the precompute loop range of auxiliary array $aa_k$ at index $i_t$. 

The reduction in arithmetic operations is only related to the modified parts of the expression trees. The first step computes the original cost of the replaced parts.
This is done by recursively expanding each auxiliary array's representative expression: if $aa_k$ contains another auxiliary array $aa_p$, it is replaced by $aa_p$'s representative expression until no auxiliary arrays remain.
Let the number of arithmetic operations in the expanded representative expression of $aa_k$ be denoted by $ops(aa_k)$. The original computational cost of the positions replaced by the auxiliary arrays in the expression tree can be computed by: 
\[ori= \prod_{t=1}^n r(i_t) \sum_{k=1}^m ops(aa_k) \cdot cnt(aa_k). \]

The generated code replaces the $ori$ operations with auxiliary array reads. However, generating the auxiliary arrays requires precomputation loops. Therefore, the computational cost of this precomputation must also be considered to assess the benefit. Since our $\eri$ method targets binary expressions, each auxiliary array's precompute expression is a binary expression. Thus, the precomputation cost of the auxiliary arrays is:
\[aft=\sum_{k=1}^m \prod_{t=1}^n r(i_t,aa_k). \]
Hence, the profit gained from the detection algorithm on binary tree is $ori-aft$. 

\section{Redundancy Elimination on n-ary Trees}
\label{sec:detectionnary}

This section enhances redundancy detection for processing n-ary expression trees. 
The "reassociation" path in Figure~\ref{fig:overview} illustrates the redundancy detection on n-ary trees.
The input binary expression trees are first transformed into n-ary trees before entering the detection loop.
The motivation and strategies for this transformation are discussed in Section~\ref{subsec:flattening}.
In each iteration of the detection algorithm, the following steps are performed: compute $\rpi$ in the same way as before; for every operator node in the n-ary expression tree, compute $\eri$ for every pair of its leaf children (operands); build the Pair Graph (defined in Section~\ref{subsec:nary-algorithm}) with $\eri$ as the color of each node; identify redundancies on the graph, with the objective function formulated
in Section~\ref{subsec:nary-algorithm}, and a heuristic method proposed in Section~\ref{subsec:dimension-first}; finally, replace multiple redundant binary subexpressions with auxiliary arrays. The iteration continues until no further redundancy is found on the graph.

\subsection{Tree Flattening}
\label{subsec:flattening}

\begin{figure}[h]
\centering
\begin{tabular}{c|c|c}
  \begin{minipage}[t]{0.38\textwidth}
    \centering
    \begin{lstlisting}[language=Fortran, basicstyle=\tiny\ttfamily]
DO j=2,n-1
DO k=2,n-1
DO i=2,n-1
 U(i,k,j)=U(i,k,j)
     +w0*R(i,k,j)
     +w1*(R(i-1,k,j)+R(i+1,k,j)
         +R(i,k-1,j)+R(i,k+1,j)
         +R(i,k,j-1)+R(i,k,j+1))
     +w2*(R(i-1,k-1,j)+R(i+1,k-1,j)
         +R(i-1,k+1,j)+R(i+1,k+1,j)
         +R(i,k-1,j-1)+R(i,k+1,j-1)
         +R(i,k-1,j+1)+R(i,k+1,j+1)
         +R(i-1,k,j-1)+R(i-1,k,j+1)
         +R(i+1,k,j-1)+R(i+1,k,j+1))
     +w3*(R(i-1,k-1,j-1)+R(i+1,k-1,j-1)
         +R(i-1,k+1,j-1)+R(i+1,k+1,j-1)
         +R(i-1,k-1,j+1)+R(i+1,k-1,j+1)
         +R(i-1,k+1,j+1)+R(i+1,k+1,j+1))
ENDDO 
ENDDO 
ENDDO
    \end{lstlisting}
  \end{minipage}
  &
  \begin{minipage}[t]{0.3\textwidth}
    \centering
    \begin{lstlisting}[language=Fortran, basicstyle=\tiny\ttfamily]
DO j=2,n-1
DO k=2,n-1
DO i=1,n
 aa_1_0(i)=R(i,k-1,j)
          +R(i,k+1,j)
          +R(i,k,j-1)
          +R(i,k,j+1)
 aa_1_1(i)=R(i,k-1,j-1)
          +R(i,k+1,j-1)
          +R(i,k-1,j+1)
          +R(i,k+1,j+1)
ENDDO
DO i=2,n-1
 U(i,k,j)=U(i,k,j) 
         +w0*R(i,k,j) 
         +w1*(R(i-1,k,j)
             +R(i+1,k,j)
             +aa_1_0(i))
         +w2*(aa_1_0(i-1)
             +aa_1_0(i+1)
             +aa_1_1(i)) 
         +w3*(aa_1_1(i-1)
             +aa_1_1(i+1))
ENDDO 
ENDDO 
ENDDO
    \end{lstlisting}
  \end{minipage}
  &
  \begin{minipage}[t]{0.3\textwidth}
    \centering
    \begin{lstlisting}[language=Fortran, basicstyle=\tiny\ttfamily]
DO j=2,n-1
DO k=2,n-1
DO i=1,n
 aa_1_0(i)=R(i,k-1,j)
          +R(i,k+1,j)
          +R(i,k,j-1)
          +R(i,k,j+1)
 aa_1_1(i)=R(i,k-1,j-1)
          +R(i,k+1,j-1)
          +R(i,k-1,j+1)
          +R(i,k+1,j+1)
 aa_3_0(i)=w1*R(i,k,j)
          +w2*aa_1_0(i) 
          +w3*aa_1_1(i)
ENDDO 
DO i=2,n-1
 U(i,k,j)=U(i,k,j)
         +w0*R(i,k,j) 
         +w1*aa_1_0(i)
         +w2*aa_1_1(i)
         +aa_3_0(i-1)
         +aa_3_0(i+1)     
ENDDO 
ENDDO 
ENDDO
    \end{lstlisting}
  \end{minipage}
\\
\end{tabular}
\caption{An example from the mgrid benchmark.}
\label{fig:mgrid}
\end{figure}

Detecting redundancies in array or scalar computations may require \emph{reassociation}, i.e., reorganizing arithmetic calculations using commutative, associative, or distributive laws.
Consider $x+y+z$ and $x+z$.
Compilers may evaluate the former as $((x+y)+z)$ (left-associative), but the redundancy with $x+z$ cannot be exposed unless reassociation is permitted.
Note that the commutative law here differs from that used in $\eri$ calculation, where expressions have only two operands.

However, reassociation is generally illegal when result consistency must be preserved, even for integers.
For instance, with large positive $x,z$ but small negative $y$, both $((x+y)+z)$ and $(x+(y+z))$ may yield the correct value, whereas $((x+z)+y)$ may overflow.
Nevertheless, compilers usually provide options enabling such transformations for aggressive optimizations.

Therefore, the implementation of RACE provides several options to control the reassociation scheme.
We describe them using four aggressive levels. First, when consistency is required, reassociation is disabled and the linear redundancy detection algorithm is used. Second, parentheses in source code are respected, and reorganization can only be applied inside each pair of parentheses. Third, parentheses are removed when the inside operators are consistent with the one outside the parentheses. Finally, the distributive law is applied to remove more parentheses.

The first three levels are relatively straightforward.
Figure~\ref{fig:example} illustrates an example of flattening an expression tree using the third level, where the nodes $A(i,2j)$, $B(i,2j)$, $A(i+1,2j)$, and $B(i+1,2j)$ share the same parent node. The redundant expression $A(i+1,2j)+B(i+1,2j)$ is exposed, which can not be identified in the original input tree.
However, the distributive law should be applied cautiously as it may destroy existing redundancies.
For example, in $e=(A(i)+B(i))*(C(i)+D(i))+(A(i+1)+B(i+1))*(C(i+2)+D(i+2))$, two redundancies exist:
$\eri(A(i)+B(i))=\eri(A(i+1)+B(i+1))$ and $\eri(C(i)+D(i))=\eri(C(i+2)+D(i+2))$.
However, if the  distributive law is applied, both of the redundancies disappear
since none of the binary multiplication expressions has the same $\eri$.
Moreover, the distributive law may incur more computations.
For example, there are 2 multiplications and 5 additions in $e$, but 8 multiplications and 7 additions after removing the parentheses.
The detailed discussion is beyond the scope of this paper.

In practice, the distributive law is applied only when multiplying by a constant or loop-invariant scalar, which may yield additional redundancies.
Figure~\ref{fig:mgrid} shows a 3D stencil loop in the SPEC2000 mgrid benchmark.
The middle part shows the result at the second aggressive level, where auxiliary arrays $aa\_1\_0$ and $aa\_1\_1$ are found in the second iteration based on $aa\_0\_*$.
Using the fourth strategy reveals new redundancies, producing a new auxiliary array $aa\_3\_0$ in the fourth iteration.

Another set of options can specify the reassociation 
for subtraction and division operations.
For example, $x-y-z$ yields binary subexpressions including $x-y$, $x-z$, and $-y-z$.
To identify redundancy between $y+z$ and $-y-z$, subtractions are rewritten as addition with negated operands (e.g., $-y-z \rightarrow (-y)+(-z)$), and the first operand in $\eri$ computation is standardized to "$+$".
The division operation can be handled similarly.

\subsection{Problem Formulation and Transformation}
\label{subsec:nary-algorithm}

The algorithm on n-ary expression trees follows the iterative scheme as on binary trees.
However, redundancy identification on n-ary trees is more complex than on binary trees, where $\eri$ values are computed directly for each binary expression and redundancies are detected when the $\eri$ values are identical.
In an n-ary tree, more than two operands may share the same parent operator node, and their evaluation order can be rearranged to eliminate more redundancies.
Therefore, multiple binary subexpressions may be identified as redundant within an n-ary expression, and some of them can conflict with each other.
Specifically, the expression $e=x\oplus y \oplus\cdots$ in an n-ary tree is denoted by \(\oplus(x,y,\cdots)\).
If both $x\oplus y$ and $y\oplus z$ in the expression $\oplus(x,y,z)$
are redundant, only one of them can belong to the solution.

To model the conflict relationship between binary subexpression candidates.
We define the \emph{Pair Graph}, an undirected conflict graph \(G=(V,E)\) to formalize the constraint. 
$V$ contains all possible binary subexpressions.
In particular, it generates $\binom{n}{2}$ binary subexpressions for each operator node with $n$ leaf children.
A node and an expression are equivalent in $G$ and used interchangeably.
Each edge indicates the conflict relation between two nodes, which means
that the two expressions (nodes) share the same array reference and cannot be extracted simultaneously.
Thus, any legal solution is an independent set of $G$.
Figure~\ref{fig:example} shows the corresponding Pair Graph
to the flattened tree.
The formal definition of $G$ is as follows:

\[V=\bigcup\limits_{e=\oplus(\cdots)}\bigcup\limits_{x,y \in e}\{ x\oplus y \mid x,y \text{ are leaves}, x \neq y\},\]

\[E=\{(v,v') \mid v,v' \in V, v=x\oplus y \neq v'=x'\oplus y', \{x,y\} \cap \{x',y'\} \neq \emptyset \}.\]

In each iteration of the detection algorithm, finding redundancies corresponds to selecting a set of non-conflicting binary subexpression candidates, which forms an independent set of the Pair Graph.
To determine the optimal solution, an objective function on the graph needs to be defined.
The loop ranges are usually not known at the compiling time.
So it cannot precisely evaluate the redundant computations
between conflict redundancies.
For example, given three expressions
$\oplus(x,y,z)$, $\oplus(x',y')$ and $\oplus(y'',z'')$, if $\eri(x\oplus y)=\eri(x'\oplus y')$ and $\eri(y\oplus z)=\eri(y''\oplus z'')$,
it is hard to determine which one incurs more redundant computations before the execution.

Therefore, we use an approximate metric.
If $n$ expressions with the same $\eri$ are extracted, the new expression trees have
$n$ fewer binary expressions that are replaced with auxiliary array loads.
The reduced number of redundant computations on the expression trees can be
approximated to $n-1$, where the minus one denotes the precomputation of the auxiliary array.
Let $S\subseteq V$ be a solution and $\eri(S)$ be the set of all $\eri$ values of expressions in $S$, i.e.,
\[\eri(S)=\{\eri(v) \mid v \in S\}.\]

The optimal solution in a single detection iteration is tied to the objective function:
\begin{equation}
    \begin{aligned}
    \argmax_{S \in I_G} |S| - |\eri(S)|,
    \end{aligned}
    \label{eq:ori-obj}
\end{equation}
where $I_G$ denotes the set of all independent sets of graph $G$.

This problem can be reduced to the Maximum Independent Set (MIS) problem by augmenting the Pair Graph $G$ to $\bar{G}=(\bar{V}, \bar{E})$.
For each $\eri$ value $k$, $\bar{G}$ adds an auxiliary node $\bar{v}_k$ connected to all nodes with $\eri$ value $k$ in $G$, i.e., $\bar{V}=V \cup V'$ where $V'=\{\bar{v}_k \mid k \in \eri(V)\}, \eri(\bar{v}_k)=k$, 
and $\bar{E}=E \cup \{(\bar{v}_k, v) \mid \bar{v}_k \in V', v \in V, \eri(v)=k\}$.
The formal description is in Theorem~\ref{theorem:MIS}.

\begin{theorem}
    The optimization problem on graph $G$ defined in Equation~\ref{eq:ori-obj}
    can be reduced to the MIS problem on $\bar{G}$.
    \label{theorem:MIS}
\end{theorem}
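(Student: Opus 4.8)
The plan is to exhibit a value-preserving correspondence between independent sets of $G$, scored by the objective $|S| - |\eri(S)|$, and independent sets of $\bar{G}$, scored by cardinality, so that a maximizer of one yields a maximizer of the other up to an additive constant. First I would fix notation: let $K = |\eri(V)|$ be the total number of distinct $\eri$ values in $G$, so that $|V'| = K$ and $K$ does not depend on the chosen solution. The target is to show
\[
\max_{\bar{S} \in I_{\bar{G}}} |\bar{S}| = K + \max_{S \in I_G}\bigl(|S| - |\eri(S)|\bigr),
\]
which immediately reduces Equation~\ref{eq:ori-obj} to an MIS computation on $\bar{G}$.

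Next I would analyze the structure of an arbitrary independent set $\bar{S}$ of $\bar{G}$. Decompose it as $\bar{S} = S \cup T$ with $S = \bar{S} \cap V$ and $T = \bar{S} \cap V'$. Because $E \subseteq \bar{E}$, the part $S$ is automatically an independent set of $G$, so $S \in I_G$. By the construction of $\bar{E}$, the only edges incident to an auxiliary node $\bar{v}_k$ run to the nodes $v \in V$ with $\eri(v) = k$; in particular there are no edges among the auxiliary nodes themselves. Hence $\bar{v}_k \in T$ forces every $v$ with $\eri(v) = k$ to be absent from $S$, i.e. $k \notin \eri(S)$, and conversely any $\bar{v}_k$ with $k \notin \eri(S)$ can be added to $S$ without violating independence. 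This yields the key characterization: for a fixed $S \in I_G$, the admissible auxiliary nodes are exactly $\{\bar{v}_k \mid k \in \eri(V) \setminus \eri(S)\}$, and they are pairwise non-adjacent.

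From this I would read off the optimal completion of any $S$: the largest independent set of $\bar{G}$ whose $V$-part equals $S$ adjoins all $K - |\eri(S)|$ admissible auxiliary nodes, giving cardinality $|S| + K - |\eri(S)| = K + (|S| - |\eri(S)|)$. Maximizing over $S \in I_G$ then gives the displayed identity, and any maximum independent set $\bar{S}^{\star}$ of $\bar{G}$ projects onto an optimizer $S^{\star} = \bar{S}^{\star} \cap V$ of the original objective. The intuition from the redundancy-analysis discussion matches this: selecting $n$ expressions sharing one $\eri$ value saves $n-1$ operations, so over all chosen values the savings is $\sum_k (n_k - 1) = |S| - |\eri(S)|$, exactly the quantity the reduction preserves.

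The main obstacle to make rigorous is the claim that a \emph{maximum} independent set of $\bar{G}$ actually takes the most profitable completion, i.e. that an optimal $\bar{S}^{\star}$ contains every admissible auxiliary node for its own $V$-part. I would settle this with a short exchange argument: if some $\bar{v}_k$ with $k \notin \eri(\bar{S}^{\star} \cap V)$ were missing, adjoining it keeps the set independent (its only neighbors lie among nodes with $\eri$ value $k$, none of which are present) and strictly increases the size, contradicting maximality. This bridges the gap between an arbitrary independent set and the maximum one, and confirms that the completion map $S \mapsto S \cup \{\bar{v}_k \mid k \notin \eri(S)\}$ and the projection $\bar{S}^{\star} \mapsto \bar{S}^{\star} \cap V$ are mutually inverse on optimal solutions, completing the reduction.
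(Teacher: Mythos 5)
Your proof is correct, and it is built on the same completion map as the paper's: $S \mapsto S \cup \{\bar{v}_k \mid k \notin \eri(S)\}$. The difference lies in how each argument handles arbitrary independent sets of $\bar{G}$. The paper packages the correspondence as a bijection $f$ from $I_G$ onto the subfamily $\{\bar{S} \in I_{\bar{G}} \mid \eri(\bar{S}) = \eri(V)\}$, proves the value-preservation $\bar{F}(f(S)) = F(S)$, and then \emph{asserts} without proof that every maximum independent set of $\bar{G}$ lies in this subfamily (``it's easy to see that a MIS $\bar{S}$ of $\bar{G}$ satisfying $\eri(\bar{S}) = \eri(V)$''). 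You instead decompose an arbitrary $\bar{S} \in I_{\bar{G}}$ as $(\bar{S} \cap V) \cup (\bar{S} \cap V')$, bound its size by the best completion of its $V$-part, and derive the identity $\max_{\bar{S} \in I_{\bar{G}}} |\bar{S}| = K + \max_{S \in I_G}\bigl(|S| - |\eri(S)|\bigr)$ with $K = |\eri(V)|$; your exchange argument then shows that a maximum independent set must contain every admissible auxiliary node for its own $V$-part. This buys something concrete: the exchange argument is exactly the justification that the paper's ``easy to see'' step requires, so your version is self-contained at the one point where the paper's proof leaves a (small) gap. What the paper's bijection framing buys, in turn, is an explicit inverse map $f^{-1}$, which is what the algorithm actually applies to recover the selected expression set from the MIS solution; in your formulation that role is played by the projection $\bar{S}^{\star} \mapsto \bar{S}^{\star} \cap V$, which you correctly show is inverse to the completion on optimal solutions.
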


\begin{minipage}{0.67\textwidth}
  \textit{Proof \wzx{sketch}}.
  We first define a bijective function $f$ that maps \wzx{each} independent set $S$ of $G$ to an independent set $\bar{S}$ of $\bar{G}$ by adding auxiliary nodes. An illustration is shown on the right.
  Then we show that every MIS of $\bar{G}$ lies in the range of $f$, which establishes a correspondence between the solution space of the original problem on $G$ and that of the MIS problem on $\bar{G}$.
  Finally, two objective functions are introduced for the optimization problems on $G$ and $\bar{G}$, respectively, and we prove their equivalence under $f$. The full proof is provided in Appendix~\ref{appendix:proof}.
\end{minipage}
\hfill
\begin{minipage}{0.32\textwidth}
  \centering
  \begin{tikzpicture}[yscale=0.4, scale=0.6]
      \coordinate (UL) at (0,0);
      \coordinate (UR) at (0,-8);
      \coordinate (DL) at (0,0);
      \coordinate (DR) at (0,-8);

      \node[] at ($(UL)+(-2,0)$){\small $G$};
      \node[] at ($(UR)+(-2,-2)$){\small $\bar{G}$};
      
      \foreach \i in {1,...,5}{
          \node[empty vertex] (ul\i) at ($(UL)+(0-72*\i:1.4)$) {};
          \node[empty vertex] (dl\i) at ($(DL)+(0-72*\i:1.4)$) {};
          \node[empty vertex] (dr\i) at ($(DR)+(0-72*\i:1.4)$) {};
          \node[empty vertex] (ur\i) at ($(UR)+(0-72*\i:1.4)$) {};
      }

      \foreach \i in {6,...,8}{
          \node[empty vertex] (ur\i) at ($(UR)+(0.3,-2.5)+(60-72*\i:1)$) {};
          \node[empty vertex] (dr\i) at ($(DR)+(0.3,-2.5)+(60-72*\i:1)$) {};
      }

      \coordinate (C1) at ($(UR)+(0.0,0.1)$);

      \foreach \i/\ang/\rad in {1/0/2.5, 2/90/2.2, 3/180/2.5, 4/270/2.2} {
          \coordinate (P\i) at ($(C1)+(\ang+25:\rad)$);
      }
      \filldraw[
          fill=gray!40,
          fill opacity=0.25,
          draw=gray,
          line width=1pt,
      ] (P1) -- ($(P2)+(0,0)$) -- (P3) -- ($(P4)+(0,0)$) -- cycle;

      \coordinate (C2) at ($(UR)+(0.2,-3.1)$);

      \foreach \i/\ang/\rad in {1/0/1.7, 2/90/1.4, 3/180/1.7, 4/270/1.4} {
          \coordinate (P\i) at ($(C2)+(\ang+24:\rad)$);
      }

      \filldraw[
          fill=gray!40,
          fill opacity=0.25,
          draw=gray,
          line width=1pt,
      ] (P1) -- ($(P2)+(0,0)$) -- (P3) -- ($(P4)+(0,0)$) -- cycle;

      \node[filled vertex, fill=green] (dl1) at (dl1){};
      \node[filled vertex, fill=red] (dl4) at (dl4){};
      \node[opac filled vertex, fill=red] (dl5) at (dl5){};
      \node[opac filled vertex, fill=yellow] (dl2) at (dl2){};
      \node[opac filled vertex, fill=yellow] (dl3) at (dl3){};

      \draw[] ($(dl4)!0.5!(dl1)$) ellipse (10pt and 60pt) node[xshift=10pt,yshift=10pt]{\small $S$};

      \node[filled vertex, fill=green] (dr1) at (dr1){};
      \node[filled vertex, fill=red] (dr4) at (dr4){};
      \node[opac filled vertex, fill=red] (dr5) at (dr5){};
      \node[opac filled vertex, fill=yellow] (dr2) at (dr2){};
      \node[opac filled vertex, fill=yellow] (dr3) at (dr3){};

      \node[opac filled vertex, fill=red] (dr6) at (dr6){};
      \node[ax vertex, pattern color=black!80] (dr6) at (dr6){};
      \node[opac filled vertex, fill=green] (dr7) at (dr7){};
      \node[ax vertex, pattern color=black!80] (dr7) at (dr7){};
      \node[filled vertex, fill=yellow] (dr8) at (dr8){};
      \node[ax vertex] (dr8) at (dr8){};

      \draw[]
          plot[smooth, tension=1]
          coordinates {
              ($(dr4)+(-0.2,0.2)$)
              ($(dr4)+(0.2,0.2)$)
              ($(dr1)+(0.2,-0.2)$)
              ($(dr8)+(-0.2,-0.2)$)
              ($(dr4)+(-0.2,0.2)$)
          };
      \node[] at ($(dr4)+(0.5,0.5)$){\small $\bar{S}$};

      \draw[<->, thick] ($(DL)!0.35!(DR)$) -- ($(DL)!0.7!(DR)$) node[midway,left]{\small $f(S)=\bar{S}$};

      \node[] at ($(ur5)+(0.7,-0.3)$){\small $V$};
      \node[] at ($(ur6)+(0.6,-0.6)$){\small $V'$};

      \foreach \i/\j in {3/1,3/5,1/5,2/4}{
          \draw[dashed, draw opacity=0.5] (dl\i) -- (dl\j);
          \draw[dashed, draw opacity=0.5] (dr\i) -- (dr\j);
      }

      \foreach \i/\j in {6/4,6/5,7/1,8/2,8/3}{
          \draw[dashed, draw opacity=0.5] (dr\i) -- (dr\j);
      }
  \end{tikzpicture}
\end{minipage}

\subsection{Inner-Dimension-First Heuristic}
\label{subsec:dimension-first}

It is well-known that the MIS problem is NP-hard.
For a graph with $n$ nodes, the worst-case time complexity is $O(2^n)$.
For expression trees with numerous operands, exhaustive search quickly becomes computationally infeasible.
In addition, Equation~\ref{eq:ori-obj} on the Pair Graph may admit multiple optimal solutions. Different solutions correspond to different sets of auxiliary arrays, which in turn can lead to different performance of the generated code.
Auxiliary arrays whose reuses occur along the innermost loop usually exhibit better locality and facilitate vectorization.
We therefore introduce a heuristic strategy that reduces the problem size and improves the quality of the chosen auxiliary arrays by prioritizing expressions that satisfy specific conditions.

The array's memory access patterns in real applications are typically regular and symmetric. As described in Section~\ref{subsec:optimization}, the auxiliary arrays can be compressed to lower dimensions to save memory and improve the cache performance. Based on the observation, we propose a heuristic algorithm called inner-dimension-first (IDF) extraction, which aims to find solutions with a higher potential for compression while reducing the problem size. This approach increases the opportunity to leverage cache locality and vectorization.
The intuition behind this approach can be illustrated by the example in Figure~\ref{fig:mgrid}.
For instance, if we choose $R(i-1,k,j)+R(i+1,k,j)$ as an auxiliary array in the first iteration of our algorithm, it cannot be compressed to a one-dimensional array because its reuses are along the outer loop dimension $j$ and $k$. Furthermore, other solutions like $R(i-1,k-1,j)+R(i-1,k,j-1)$ are irregular and may result in discrete memory access, leading to poor locality. In contrast, if $R(i,k,j-1)+R(i,k,j+1)$ is chosen, it can be reused along $i$ dimension.

IDF extraction is a greedy algorithm that prioritizes selecting expressions with zero delta in the innermost dimension, i.e., $e.\mathit{exprDelta}[i_n] = 0$. A zero delta in the innermost dimension indicates that the reuses are more likely to occur along the innermost $i_n$ dimension, e.g., any binary subexpression of the $\mathrm{aa\_1\_0}$'s $\mathit{exprDelta}[i]=0$ and $\mathrm{aa\_1\_0}$ is reused along $i$ loop in Figure~\ref{fig:mgrid}.
The algorithm runs in a try-until manner. First, it only generates the nodes with $\mathit{exprDelta}[i_n]=0$ when constructing Pair Graph. This forms a subgraph of the original Pair Graph, containing significantly fewer nodes. If any solutions are found, the algorithm accepts them and proceeds to the next extraction iteration. If no solution is found, the algorithm will generate a new subgraph containing only nodes with $\mathit{exprDelta}[i_{n-1}]=0$ and try to find the solution of this graph, repeating this process until it finds a solution at some dimension $i$.

Although IDF does not change the exponential worst-case time complexity $O(2^n)$, it speeds up solving by significantly reducing the number of nodes $n$ in the Pair Graph. In the case of Figure~\ref{fig:mgrid}, the original Pair Graph would contain more than one hundred nodes, making exhaustive search infeasible, whereas the IDF strategy reduces the search to a subgraph with only 18 nodes and solves it in negligible time.
In addition, other heuristic or approximate methods for MIS could also be incorporated into the solver, but such extensions are beyond the scope of this paper.

\section{Implementation and Limitations}
\label{sec:threats}

We implemented RACE based on the Flang and Clang frontends in LLVM infrastructure. Our tool provides multiple options to control different levels of optimization aggressiveness. To invoke the transformation, developers insert compilation directives before and after the target loop nest, after which RACE performs a source-to-source transformation based on the abstract syntax tree (AST) produced by the frontend.
RACE parses the annotated loop nests and extracts the candidate expression trees. Then it applies the proposed algorithm to detect redundant computations and rewrites the expression trees accordingly. After the detection finished, it compresses the auxiliary arrays and generates the transformed code based on the rewritten expression trees.

RACE currently targets a restricted but practically important class of loop nests.
In this paper, we focus on perfectly nested loops without internal control flow, redundant computations over unmodified arrays, and relatively simple array references of the form
$A[a_1 i_{s_1}+b_1]\cdots[a_n i_{s_n}+b_n]$.
Some simple non-perfectly nested loops, such as those with initialization statements or with branches containing side-effect-free statements, may still be compatible with the RACE mechanism. However, extending the approach to general non-perfectly nested loops requires a complete legality analysis and formalization, which is left for future work.

RACE is an annotation-guided aggressive optimization.
It uses developer annotations both to mark target loops and to assert semantic assumptions.
For example, the method assumes that targeted function calls such as $\sin$ and $\cos$ are pure and deterministic, i.e., they have no side effects and always return the same floating-point value for the same input.
Although compiler analyses may help check some of these assumptions, full automation remains difficult in the presence of issues such as pointer aliasing and unknown call targets.
When reassociation is enabled, the optimization is provided on an "as-is" basis: arithmetic operations may be reordered, and numerical stability is not guaranteed.
It is therefore the developer's responsibility to enable these aggressive modes only when such changes are acceptable.

Another limitation is the storage overhead of the auxiliary arrays introduced by the transformation.
Although the contraction scheme in Section~\ref{subsec:optimization} can substantially reduce the footprint of temporary arrays, in extreme cases, the additional memory demand may exceed the available memory budget and lead to an out-of-memory failure.
This extra storage may also increase memory traffic and cache pressure, so reducing arithmetic operations does not always lead to a speedup.
Therefore, RACE is exposed as an opt-in optimization through annotations, so that programmers can decide whether the optimization is appropriate for a given loop.

\section{Evaluation}
\label{sec:experiment}
\subsection{Experiment Setup}
We evaluated performance test in SPEC applications and extracted the kernels for more detailed performance analysis.
The performance tests in SPEC applications were conducted on two different platforms: an Intel Xeon Gold 6467C 30-core CPU @ 3.40 GHz and an AMD EPYC 9654 96-core Processor @ 2.40 GHz. The detailed performance analysis of the extracted kernels was only conducted on the Intel platform due to the limited space in the paper and the similarity between the platforms.
The SPEC applications were run using the standard configurations and input data provided by SPEC suite. The benchmarks were compiled by gfortran-9.4.0 or gcc-9.4.0 with compilation flags "-O3" to enable all standard optimizations.

\subsection{Benchmark and Redundancy Analysis}

\begin{table}[h]
  \centering
  \setlength{\tabcolsep}{4pt}
  \caption{Benchmark optimization table. \textit{Reduced Ops} is the fraction of the run-time reduced arithmetic operations(including $\sin/\cos$). \textit{AA Num} is the number of auxiliary arrays found in total. \textit{Alg Iter} is the iteration number of our algorithm to find all the auxiliary arrays. \textit{Operations} are static operation analysis. RACE-NR refers to the use of our algorithm without reassociation.}
  \begin{tabular}{|c|c|c|c|c|c|c|c|c|c|}
    \hline
  \multirow{2}{*}{{App}} & \multirow{2}{*}{{Case}} & \multirow{2}{*}{{\small \makecell{Reduced \\ Ops}}} & \multirow{2}{*}{{\small \makecell{AA \\ Num}}}  
                              & \multirow{2}{*}{{\small \makecell{Alg \\ Iter}}}  & \multicolumn{5}{c|}{{Operations(Base/RACE-NR/RACE)}}  \\ \cline{6-10} 
                               & & & &   & add & sub & mul & div & sin/cos \\ \hline
  \multirow{4}{*}{POP}     & hdifft\_gm   & 0.63 & 2        & 1 & 14/11/4     & - & - & - & - \\
            & calc\_tpoints& 0.55 & 9        & 3 & 9/9/6      & - & 11/5/5 & - & 16/4/4     \\
            & ocn\_export  & 0.17 & 2        & 1 & 1/1/1      & 1/1/1      & 6/6/5      & 2/2/1      & 4/2/2      \\ \hline
  \multirow{5}{*}{WRF}     & rhs\_ph1     & 0.06 & 3        & 2 & 6/5/5      & 9/9/9      & 12/10/10    & 2/2/2      & - \\ 
              & rhs\_ph2     & 0.16 & 3        & 2 & 6/5/5      & 9/9/9      & 12/10/10    & 2/2/2      & - \\ 
              & diffusion1 & 0.44 & 20       & 5 & 18/18/8     & 6/4/4      & 26/21/15    & 4/3/2      & - \\ 
              & diffusion2 & 0.60 & 19       & 5 & 18/16/8     & 6/4/4      & 26/20/14    & 4/3/2      & - \\ 
              & diffusion3 & 0.49 & 19       & 6 & 10/6/6     & 6/4/4      & 32/18/17    & 2/1/1      & - \\ \hline
  \multirow{3}{*}{mgird}   & psinv        & 0.38 & 9        & 3 & 27/23/13    & - & 4/4/6 & - & - \\
              & resid        & 0.45 & 4        & 3 & 23/19/11     & 4/4/4      & 4/4/4      & - & - \\
              & rprj3        & 0.19 & 5        & 2 & 26/26/20    & - & 4/4/4 & - & - \\ \hline
  \multirow{4}{*}{stencil}   & gaussian        & 0.43 & 13        & 4 & 24/24/16    & - & 25/6/11 & 1/1/1 & - \\
              & j3d27pt        & 0.35 & 20        & 3 & 26/26/18     & -      & 27/15/15      & 1/1/1 & - \\
              & poisson        & 0.37 & 3        & 2 & 16/15/8     & 2/2/2      & 3/3/3      & - & - \\
              & derivative     & 0.71 & 86        & 11 & 99/54/45     & 96/24/16      & 297/101/76      & - & - \\ \hline
  \end{tabular}
  \label{tab:benchmark}
\end{table}

We selected 11 cases in three different applications and 4 stencil kernels to evaluate the performance of the code generated by our tool.
For each case above, RACE completes the source-to-source transformation in less than 1.5 seconds, showing the efficiency of our algorithm.
Table~\ref{tab:benchmark} shows the information of the selected cases. All of the loops in the cases are floating-point computations. The column label \textit{Reduced Ops} represents reduction in run-time arithmetic operations. \textit{Operations} column contains static analysis of operations, where the numbers are the counts of arithmetic operations in one iteration of the innermost loop.

POP is a high-performance ocean modeling software used for climate simulation and ocean circulation modeling. WRF is a numerical weather prediction model used for atmospheric research and operational forecasting. mgrid is a benchmark used to simulate multigrid algorithms for solving partial differential equations.
The remaining cases are stencil kernels widely used in scientific applications. \textit{Poisson}, \textit{derivative} and \textit{j3d27pt} are 3D stencil kernels whose grid size is set to $100^3$ in our experiment, while \textit{gaussian} are 2D stencil kernels with grid size set to $500^2$. 
The POP and WRF kernels were evaluated using their official implementations in the SPEC CPU 2017 benchmark suite, and the mgrid kernel was executed through the SPEC CPU 2000 suite.

Our algorithm effectively detects various redundant arithmetic operations. The $\sin/\cos$ operations are reduced from $16$ to only $4$ in the case \textit{calc\_tpoints} of POP. A large number of auxiliary arrays are found in WRF's cases with few iterations, leading to nearly a half reduction in arithmetic operations. RACE can also handle large expression trees, such as the \textit{derivative} case, where 86 auxiliary arrays are found and the multiplications are reduced from $297$ to $76$.

The detection algorithm on n-ary trees can more effectively identify redundancies. If only the algorithm on binary trees is used (which is denoted as RACE-NR in Table~\ref{tab:benchmark}), the case \textit{diffusion1} and \textit{diffusion2} can only identify 8 and 9 auxiliary arrays, respectively, with a reduction of 8 and 11 redundant operations. In contrast, using the n-ary tree-based algorithm enables the identification of 20 and 19 auxiliary arrays, with a reduction of 25 and 26 redundant operations, respectively.

Cases in mgrid and stencil kernels involve many array references with the same $\rpi$, leading to multiple solutions when solving the optimal reuse problem defined in Section~\ref{subsec:nary-algorithm}. \wzx{The inner-dimension-first method proposed} in Section~\ref{subsec:dimension-first} identifies the auxiliary arrays with innermost loop reuse, which are more cache-friendly. 
In case \textit{rprj3}, the loop variables inside the array subscripts are multiplied by a factor of 2. Although the names of the array references are the same, they have different $\rpi$ values. The redundancies are successfully detected, demonstrating the effectiveness of our two-level hash algorithm.

\subsection{\wzx{Speedup Results}}

\begin{figure}[!htb]
  \centering
  \includegraphics[width=\linewidth]{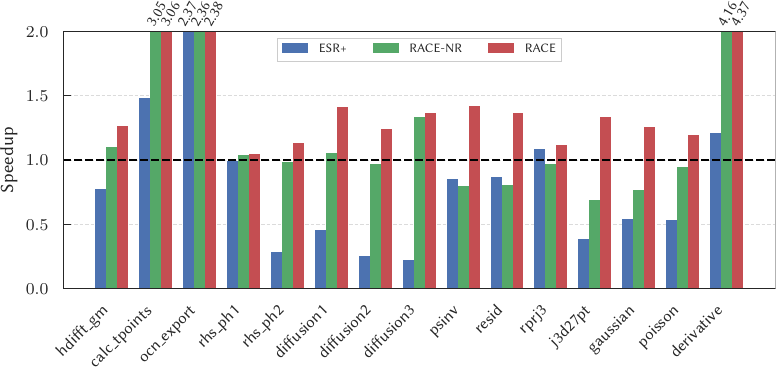}
  \caption{The speedup of different optimization methods relative to the original code on the Intel CPU. Among the legend, RACE-NR refers to the use of our algorithm without reassociation; RACE uses the full abilities which involve reassociation, inner-dimension-first extraction, and array contraction. Note that the ESR+ represents ESR+reassociation.}
  \label{fig:perf-intel}
\end{figure}

\begin{figure}[!htb]
  \centering
  \includegraphics[width=\linewidth]{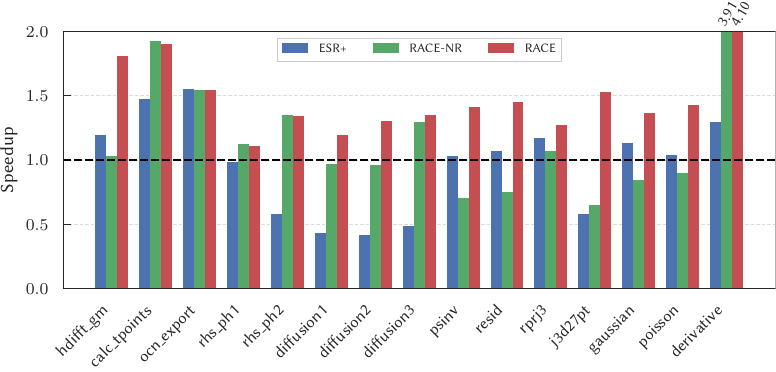}
  \caption{The speedup of different optimization methods relative to the original code on the AMD CPU. The legend is the same as in Figure~\ref{fig:perf-intel}.}
  \label{fig:perf-amd}
\end{figure}

We evaluated the single-core performance of the code generated by our tool. Results for SPEC benchmarks were measured within the applications using the standard SPEC run configuration. Since our algorithm targets loop-level optimization, we therefore report the loop-level speedup. Figure~\ref{fig:perf-intel} and Figure~\ref{fig:perf-amd} show the speedup compared to the baseline (original) code and ESR+ (ESR+reassociation) on the Intel and AMD CPUs, respectively.

Our results demonstrate that RACE achieves significant performance improvements across most benchmarks. RACE outperforms ESR+ in 14 out of 15 cases on both platforms, except for the \textit{ocn\_export} case where RACE generates the same code as ESR+.
The RACE-NR method, despite not using reassociation, still provides notable improvements in cases like \textit{calc\_tpoints}, \textit{derivative}.
The ESR+ method shows several performance regressions. This is because the auxiliary variables extracted by ESR+ are scalars placed in the innermost loop and reused in a pipelined manner, which hinders the compiler's vectorization optimization. RACE can retain the auxiliary arrays as multi-dimensional arrays, which facilitates vectorization.

The overall performance gain is influenced by both the amount and types of detected redundancies, as detailed in Table~\ref{tab:benchmark}. Fewer auxiliary arrays are identified in the \textit{rhs\_ph} case than in \textit{diffusion}, explaining the lower speedup for \textit{rhs\_ph}. In contrast, the \textit{derivative} case contains many redundant operations—up to 86 auxiliary arrays and 221 eliminated multiplications—leading to a significant speedup on both platforms. The intensity of redundant arithmetic operations largely determines the performance gain. On the Intel CPU, the speedup reaches 3.06$\times$ in the \textit{calc\_tpoints} case of POP, where the $\sin/\cos$ functions are computation-intensive. A similar pattern is observed in \textit{ocn\_export}, where the reduction in operations is small but the speedup remains notable.

However, for loops that are memory-access intensive, such as the \textit{rprj3} case, the optimization is less effective. Moreover, reducing arithmetic intensity may increase memory pressure. While auxiliary arrays reduce arithmetic operations, they introduce additional memory allocations and increase the memory access footprint. The newly allocated arrays increase the cache working set, potentially leading to more cache misses. Compressing the dimensions of auxiliary arrays using the method described in Section~\ref{subsec:optimization} helps mitigate this issue.

\subsection{\wzx{Scaling Results}}

To gain deeper insights into the performance characteristics of the cases, we extracted the representative kernels from the SPEC applications and converted them into standalone benchmarks for detailed performance analysis. The performance analysis focuses on the runtime and memory data volume (the bytes transferred between the last-level cache and DRAM) for different problem sizes and the parallel scalability.

\subsubsection{Runtime Comparison}

\begin{figure}[!htb]
  \centering
  \includegraphics[width=\textwidth]{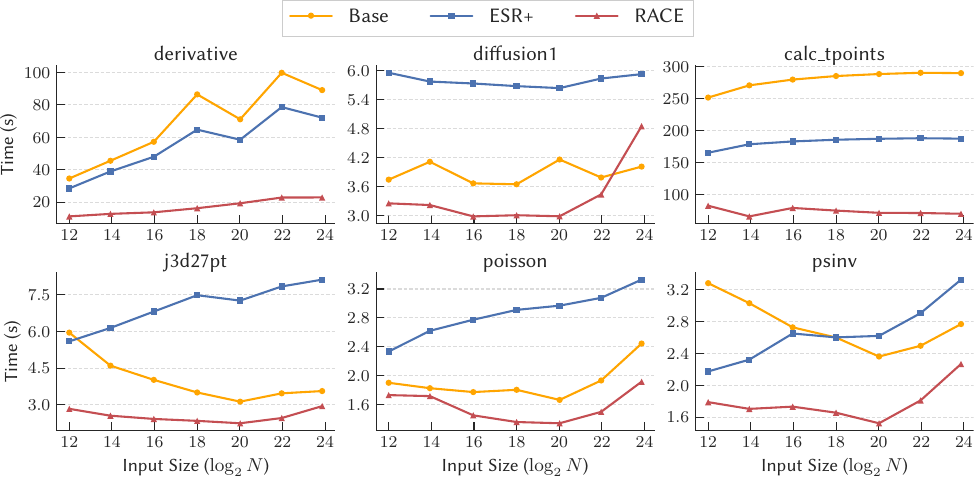}
  \caption{Runtime vs. Input Size on Intel CPU. The total computation amounts ($N \times T = 2^{31}$) are the same for different kernels and input sizes.}
  \label{fig:runtime_vs_size}
\end{figure}

Figure~\ref{fig:runtime_vs_size} presents the runtime of six representative kernels on the Intel CPU for varying input sizes ($\log_{2} N$). For all experiments, the product of input size and time step ($N \times T = 2^{31}$) is fixed to ensure a fair comparison across different kernels and input sizes.

Overall, RACE achieves the lowest runtime, except for \textit{diffusion1} at size $N=2^{24}$, consistent with the previous results.
The runtime trends vary across different kernels. For compute-intensive kernels such as \textit{derivative} and \textit{calc\_tpoints}, the runtime of RACE increases more slowly as problem size increases. In contrast, for less compute-intensive kernels such as \textit{diffusion1} and \textit{psinv}, the runtime grows more steeply at larger input sizes, suggesting that the memory access cost becomes dominant, which conceals part of the performance gain from reducing arithmetic operations.

\wzx{Although the total computation amounts are kept the same, the runtime of certain kernels can be shorter at medium input sizes than at small input sizes, due to the overhead of handling loop remainder from vectorization, which is more pronounced when the input grid is small. }

\subsubsection{Memory Volume Saved by Array Contraction}

\begin{figure}[!htbp]
  \centering
  \includegraphics[width=\textwidth]{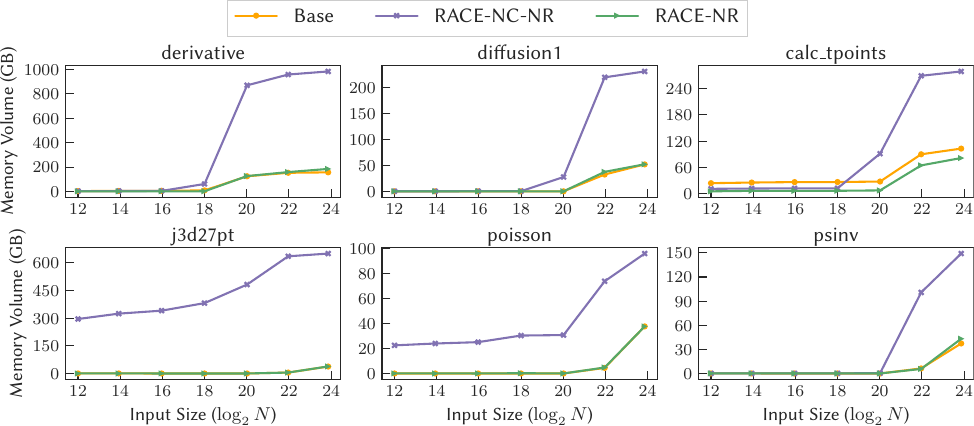}
  \caption{Memory Volume vs. Input Size on Intel CPU. RACE-NC-NR denotes the RACE method without array contraction and reassociation. RACE-NR uses array contraction while remaining reassociation disabled.}
  \label{fig:mem_volume}
\end{figure}

\begin{figure}[!hbtp]
  \centering
  \includegraphics[width=\textwidth]{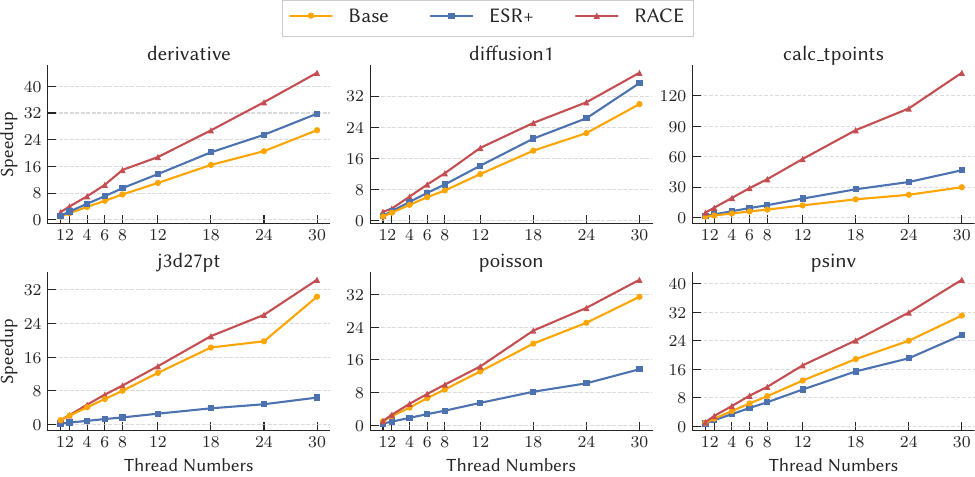}
  \caption{Multithreaded Scalability on Intel CPU. The speedup is computed as $t_{1,\text{Base}}/t_{n,\text{opt}}$, where $t_{n,\text{opt}}$ denotes the runtime of opt-optimized code using $n$ threads. $t_{1,\text{Base}}$ is the runtime of the original code using one thread. All kernel and thread number configurations were run with $N=180^3,T=500$.}
  \label{fig:scalability}
\end{figure}

The run-time memory volume is measured to demonstrate the effectiveness of array contraction proposed in Section~\ref{subsec:optimization}. Figure~\ref{fig:mem_volume} shows the memory volume transferred between the last-level cache and DRAM for different input sizes on the Intel CPU. RACE-NC-NR denotes the RACE method without array contraction and reassociation. The only difference between RACE-NC-NR and RACE-NR is that the former disables array contraction while the latter enables it. This allows us to isolate the impact of array contraction from other factors affecting memory volume.

The results show that the memory volume of RACE-NC-NR is significantly higher than that of RACE-NR, while RACE-NR is close to the baseline code. This indicates that array contraction effectively alleviates the memory pressure introduced by auxiliary arrays, especially at larger input sizes, leading to substantial performance improvements. The memory volume is much higher at large input sizes because the input grids exceed the cache capacity.

\subsubsection{Multi-threaded Scalability}

We also evaluate the OpenMP multithreaded scalability of our approach, which is shown in Figure~\ref{fig:scalability}. The speedup is compared against the baseline code using a single thread. The results indicate that RACE achieves nice scalability and outperforms the other methods consistently across different thread counts. The introduced auxiliary arrays do not hinder multi-threaded performance.

\section{Conclusion}
\label{sec:conclusion}

This paper proposes RACE, a two-level hashing framework for detecting redundant array computations in nested loops. The first-level hash uses $\rpi$ to determine whether two array references share the same reference pattern.
The second-level hash computes $\eri$ for each binary expression as an attribute of the expression, encoding expression-level reuse without explicitly comparing different expressions.
Using $\eri$, we give a linear-time algorithm on binary expression trees and discuss code generation and optimization for the resulting auxiliary arrays.
Furthermore, tree flattening and the Pair Graph are introduced to identify redundancies in n-ary trees. The problem is formulated and reduced to a maximum independent set problem. To efficiently find a solution, we propose a memory-friendly inner-dimension-first heuristic that prioritizes selecting auxiliary arrays with inner-loop reuse.
Experiments show that RACE achieves significant speedups across various benchmarks, demonstrating the effectiveness of the algorithm.

\begin{acks}
This work is supported by National Key R\&D Program of China under Grant No. 2025YFB3003604, 2023YFB3001700, National Natural Science Foundation of China under Grant No. 62372432.
\end{acks}

\section*{Data-Availability Statement}

The artifact for the experiments is available on Zenodo~\cite{RACE-artifact}.
It is provided as a Docker image archive that includes all programs and scripts required to reproduce the experimental results.

\bibliographystyle{ACM-Reference-Format}
\bibliography{references}

\appendix

\section{Proof of Theorem~\ref{theorem:MIS}}
\label{appendix:proof}

\begin{figure}[htbp]
    \centering
    \begin{tikzpicture}[yscale=0.4]
        \coordinate (UL) at (0,0);
        \coordinate (UR) at (5,0);
        \coordinate (DL) at (0,-8);
        \coordinate (DR) at (5,-8);

        \node[] at ($(UL)+(0,3)$){$G$};
        \node[] at ($(UR)+(0,3)$){$\bar{G}$};
        \foreach \i in {1,...,5}{
            \node[empty vertex] (ul\i) at ($(UL)+(0-72*\i:1.4)$) {};
            \node[empty vertex] (dl\i) at ($(DL)+(0-72*\i:1.4)$) {};
            \node[empty vertex] (dr\i) at ($(DR)+(0-72*\i:1.4)$) {};
            \node[empty vertex] (ur\i) at ($(UR)+(0-72*\i:1.4)$) {};
        }

        \foreach \i in {6,...,8}{
            \node[empty vertex] (ur\i) at ($(UR)+(0.3,-2.5)+(60-72*\i:1)$) {};
            \node[empty vertex] (dr\i) at ($(DR)+(0.3,-2.5)+(60-72*\i:1)$) {};
        }

        \coordinate (C1) at ($(UR)+(0.0,0.1)$);

        \foreach \i/\ang/\rad in {1/0/2.5, 2/90/2.2, 3/180/2.5, 4/270/2.2} {
            \coordinate (P\i) at ($(C1)+(\ang+25:\rad)$);
        }
        \filldraw[
            fill=gray!40,
            fill opacity=0.25,
            draw=gray,
            line width=1pt,
        ] (P1) -- ($(P2)+(0,0)$) -- (P3) -- ($(P4)+(0,0)$) -- cycle;

        \coordinate (C2) at ($(UR)+(0.2,-3.1)$);

        \foreach \i/\ang/\rad in {1/0/1.7, 2/90/1.4, 3/180/1.7, 4/270/1.4} {
            \coordinate (P\i) at ($(C2)+(\ang+24:\rad)$);
        }

        \filldraw[
            fill=gray!40,
            fill opacity=0.25,
            draw=gray,
            line width=1pt,
        ] (P1) -- ($(P2)+(0,0)$) -- (P3) -- ($(P4)+(0,0)$) -- cycle;

        \node[filled vertex, fill=green] (ul1) at (ul1){};
        \node[filled vertex, fill=red] (ul4) at (ul4){};
        \node[filled vertex, fill=red] (ul5) at (ul5){};
        \node[filled vertex, fill=yellow] (ul2) at (ul2){};
        \node[filled vertex, fill=yellow] (ul3) at (ul3){};

        \node[filled vertex, fill=green] (ur1) at (ur1){};
        \node[filled vertex, fill=red] (ur4) at (ur4){};
        \node[filled vertex, fill=red] (ur5) at (ur5){};
        \node[filled vertex, fill=yellow] (ur2) at (ur2){};
        \node[filled vertex, fill=yellow] (ur3) at (ur3){};

        \node[filled vertex, fill=red] (ur6) at (ur6){};
        \node[ax vertex] (ur6) at (ur6){};
        \node[filled vertex, fill=green] (ur7) at (ur7){};
        \node[ax vertex] (ur7) at (ur7){};
        \node[filled vertex, fill=yellow] (ur8) at (ur8){};
        \node[ax vertex] (ur8) at (ur8){};

        \node[filled vertex, fill=green] (dl1) at (dl1){};
        \node[filled vertex, fill=red] (dl4) at (dl4){};
        \node[opac filled vertex, fill=red] (dl5) at (dl5){};
        \node[opac filled vertex, fill=yellow] (dl2) at (dl2){};
        \node[opac filled vertex, fill=yellow] (dl3) at (dl3){};

        \draw[] ($(dl4)!0.5!(dl1)$) ellipse (10pt and 60pt) node[xshift=15pt,yshift=15pt]{$S$};

        \node[filled vertex, fill=green] (dr1) at (dr1){};
        \node[filled vertex, fill=red] (dr4) at (dr4){};
        \node[opac filled vertex, fill=red] (dr5) at (dr5){};
        \node[opac filled vertex, fill=yellow] (dr2) at (dr2){};
        \node[opac filled vertex, fill=yellow] (dr3) at (dr3){};

        \node[opac filled vertex, fill=red] (dr6) at (dr6){};
        \node[ax vertex, pattern color=black!80] (dr6) at (dr6){};
        \node[opac filled vertex, fill=green] (dr7) at (dr7){};
        \node[ax vertex, pattern color=black!80] (dr7) at (dr7){};
        \node[filled vertex, fill=yellow] (dr8) at (dr8){};
        \node[ax vertex] (dr8) at (dr8){};

        \draw[]
            plot[smooth, tension=1]
            coordinates {
                ($(dr4)+(-0.2,0.2)$)
                ($(dr4)+(0.2,0.2)$)
                ($(dr1)+(0.2,-0.2)$)
                ($(dr8)+(-0.2,-0.2)$)
                ($(dr4)+(-0.2,0.2)$)
            };
        \node[] at ($(dr4)+(0.4,0.3)$){$\bar{S}$};


        \draw[<->, thick] ($(DL)!0.35!(DR)$) -- ($(DL)!0.7!(DR)$) node[midway,above]{$f(S)=\bar{S}$};

        \node[] at ($(ur5)+(0.7,0)$){$V$};
        \node[] at ($(ur6)+(0.5,-0.5)$){$V'$};

        \foreach \i/\j in {3/1,3/5,1/5,2/4}{
            \draw (ul\i) -- (ul\j);
            \draw (ur\i) -- (ur\j);
            \draw[dashed, draw opacity=0.5] (dl\i) -- (dl\j);
            \draw[dashed, draw opacity=0.5] (dr\i) -- (dr\j);
        }

        \foreach \i/\j in {6/4,6/5,7/1,8/2,8/3}{
            \draw (ur\i) -- (ur\j);
            \draw[dashed, draw opacity=0.5] (dr\i) -- (dr\j);
        }
    \end{tikzpicture}
    \caption{The above two graphs illustrate an example of the graph $G$ and the corresponding auxiliary graph $\bar{G}$.
    The bottom two graphs highlight an independent set $S$ of $G$ and its corresponding independent set $\bar{S}=f(S)$ of $\bar{G}$.}
    \label{fig:proof-example}
\end{figure}

The objective function defined in Equation~\ref{eq:ori-obj} is

\begin{equation*}
    \begin{aligned}
    \argmax_{S \in I_G} |S| - |\eri(S)|,
    \end{aligned}
\end{equation*}
where $I_G$ denotes the set of independent sets of graph $G$.

To reduce this problem to a Maximum Independent Set (MIS) problem, an auxiliary graph $\bar{G}=(\bar{V}, \bar{E})$ is constructed.
For each $\eri$ value $k$, $\bar{G}$ adds an auxiliary node $\bar{v}_k$ connected to all nodes with $\eri$ value $k$ in $G$, i.e., $\bar{V}=V \cup V'$ where $V'=\{\bar{v}_k \mid k \in \eri(V)\}, \eri(\bar{v}_k)=k$, and $\bar{E}=E \cup \{(\bar{v}_k, v) \mid \bar{v}_k \in V', v \in V, \eri(v)=k\}$.
An example of graph $G$ and the corresponding auxiliary graph $\bar{G}$ are shown in Figure~\ref{fig:proof-example}.
The formal reduction description is in Theorem~\ref{theorem:MIS}. To prove the theorem, we first introduce the following function and lemmas.

Define transformation function $f: I_G \to f(I_G)$ as \[f(S) = S \cup \{\bar{v}_k \in V' \mid k \notin \eri(S)\}.\] We have lemma~\ref{lemma:bijective}.

\begin{lemma}
    $f$ is a bijective function.
    \label{lemma:bijective}
\end{lemma}

\begin{proof}
  It can be seen that $f$ adds auxiliary nodes of $\bar{G}$ with disappeared $\eri$ values into $S$. 
  We have $\forall S_0,S_1 \in I_G, S_0 \neq S_1 \implies f(S_0) \neq f(S_1)$. So $f$ is injective. The codomain of $f$ is its range $f(I_G)$, so it is surjective. Therefore, $f$ is bijective.
\end{proof}

Figure~\ref{fig:proof-example} shows how $f$ maps an independent set $S$ of $G$ to an independent set $\bar{S}=f(S)$ of $\bar{G}$ by adding auxiliary nodes into $S$.
Lemma~\ref{lemma:MIS-in-range} is stated as follows, which allows $f$ to connect the solutions of the original problem on $G$ with the solutions of the MIS problem on $\bar{G}$.

\begin{lemma}
  Any MIS of $\bar{G}$ is in the range of function $f$.
  \label{lemma:MIS-in-range}
\end{lemma}

\begin{proof}
  According to the construction of $\bar{G}$,
  $\forall \bar{S}\in f(I_G)$, $\bar{S}\in I_{\bar{G}}$ (i.e., $\bar{S}$ is an independent set of $\bar{G}$), so we have $f(I_G)\subseteq I_{\bar{G}}$.
  Actually, we can prove that $f(I_G)$ is the set of all the independent sets
  $\bar{S}$ of $\bar{G}$ such that 
  $\eri(\bar{S})=\eri(V)$, i.e., $f(I_G) = \{\bar{S} \in I_{\bar{G}} \mid \eri(\bar{S})=\eri(V)\}$.
  
  $\subseteq$: given $\bar{S} \in f(I_G)$, there exists $S \in I_G$ such that $f(S)=\bar{S}$. 
  The function $f$ augments 
  the independent set $S$ with $\bar{v}_k$  for all $k$ values that are not in $\eri(S)$, thus
  we have $\eri(f(S))=\eri(\bar{S})=\eri(V)$.
  
  $\supseteq$: given an independent set
  $\bar{S}\in I_{\bar{G}}$ such that 
  $\eri(\bar{S})=\eri(V)$,
  we have $S = \bar{S} \setminus V'$ is 
  an independent set of $G$
  and $\bar{S} = f(S)\in f(I_G)$.
  
  It's easy to see that 
  a MIS $\bar{S}$ of $\bar{G}$
  satisfying $\eri(\bar{S})=\eri(V)$ (otherwise some $\bar{v}_k$ could be added without
  breaking independence, contradicting maximality).
  Thus, $f(I_G)$ contains all MIS of $\bar{G}$.
\end{proof}

\begin{proof}[Proof of Theorem~\ref{theorem:MIS}]
    
  Define objective functions 
  \[ F(S)=|S| + |\eri(V)| - |\eri(S)|, \quad \text{and} \quad \bar{F}(S)=|S|. \] 
  Since $|\eri(V)|$ is a constant given $G$, optimizing $F(S)$ is equivalent to optimizing $F(S) - |\eri(V)|$ which is the objective function in Equation~\ref{eq:ori-obj}, as additive constants do not affect the optimal solution. We now prove that 
  solving the original problem defined by Equation~\ref{eq:ori-obj} ($\argmax_{S \in I_G} F(S)$) on graph $G$ is equivalent to solving the MIS problem ($\argmax_{\bar{S} \in I_{\bar{G}}} \bar{F}(\bar{S})$) on the auxiliary graph $\bar{G}$: 
  
  $f(S)$ adds auxiliary nodes with disappeared $\eri$ values into $S$. So we have
  \begin{equation}
      \begin{aligned}
          \forall S \in I_G,\bar{F}(f(S)) &= |f(S)| \\ 
          &=|S| + |\eri(V)| - |\eri(S)| \\
          &=F(S).
      \end{aligned}
      \label{eq:F-f-trans}
  \end{equation}
  As proved in Lemma~\ref{lemma:bijective}, $f$ is bijective, so we can rewrite Equation~\ref{eq:F-f-trans} as
  \begin{equation}
      \forall \bar{S} \in f(I_G), \bar{F}(\bar{S})=F(f^{-1}(\bar{S})).
      \label{eq:F-f-inverse-trans}
  \end{equation}

  Since Lemma~\ref{lemma:MIS-in-range} shows that any MIS of $\bar{G}$ belongs to $f(I_G)$, the objective function for the MIS problem on $\bar{G}$ is equivalent to $\argmax_{\bar{S} \in f(I_G)} \bar{F}(\bar{S})$. Let $\bar{S}^*$ denotes one of the solutions for the MIS problem on $\bar{G}$ and $S^*=f^{-1}(\bar{S}^*)$. Then we have
  \begin{equation*}
      \begin{aligned}
          F(S^*) &= F(f^{-1}(\bar{S}^*)) &&\\ 
          &= \bar{F}(\bar{S}^*) &&\text{(by Equation~\ref{eq:F-f-inverse-trans})} \\ 
          &= \max_{\bar{S} \in f(I_G)} \bar{F}(\bar{S}) &&\text{(because $\bar{S}^*$ is an MIS on $\bar{G}$)} \\
          &= \max_{S \in I_G} \bar{F}(f(S)) &&\text{(replace variable $\bar{S}$ with $f(S)$)} \\
          &= \max_{S \in I_G} F(S) &&\text{(by Equation~\ref{eq:F-f-trans})} .
      \end{aligned}
  \end{equation*}
  So $S^*=\argmax_{S \in I_G} F(S)$.
  Thus, any solution $\bar{S}^*$ for the MIS problem on $\bar{G}$ yields a solution $S^*=f^{-1}(\bar{S}^*)$ for the optimization problem on $G$ defined by Equation~\ref{eq:ori-obj}.
\end{proof}

\end{document}